\newtheorem{theorem}{Theorem}[section]
\newtheorem{remark}{Remark}[section]
\begin{document}
\title{Periodic Analog Channel Estimation Aided Beamforming for Massive MIMO Systems}
\author{Vishnu~V.~Ratnam,~\IEEEmembership{Student~Member,~IEEE,}
and Andreas~F.~Molisch,~\IEEEmembership{Fellow,~IEEE,}
\thanks{V. V. Ratnam and A. F. Molisch are with the Ming Hsieh Department of Electrical and Computer Engineering, University of Southern California, Los Angeles, CA, 90089 USA (e-mail: \{ratnam, molisch\}@usc.edu). This work was supported by the National Science Foundation under project CIF-1618078.} 
}

\IEEEtitleabstractindextext{%
\begin{abstract}
Analog beamforming is an attractive and cost-effective solution to exploit the benefits of massive multiple-input-multiple-output systems, by requiring only one up/down-conversion chain. However, the presence of only one chain imposes a significant overhead in estimating the channel state information required for beamforming, when conventional digital channel estimation (CE) approaches are used. As an alternative, this paper proposes a novel CE technique, called periodic analog CE (PACE), that can be performed by analog hardware. By avoiding digital processing, the estimation overhead is significantly lowered and does not scale with number of antennas. PACE involves periodic transmission of a sinusoidal reference signal by the transmitter, estimation of its amplitude and phase at each receive antenna via analog hardware, and using these estimates for beamforming. To enable such non-trivial operation, two reference tone recovery techniques and a novel receiver architecture for PACE are proposed and analyzed, both theoretically and via simulations. Results suggest that in sparse, wide-band channels and above a certain signal-to-noise ratio, PACE aided beamforming suffers only a small loss in beamforming gain and enjoys a much lower CE overhead, in comparison to conventional approaches. Benefits of using PACE aided beamforming during the initial access phase are also discussed.
\end{abstract}

\begin{IEEEkeywords}
Hybrid beamforming, analog beamforming, massive MIMO, channel estimation, analog channel estimation, initial access, carrier recovery, carrier arraying.
\end{IEEEkeywords}}

\maketitle

\IEEEdisplaynontitleabstractindextext

\IEEEpeerreviewmaketitle

\section{Introduction} \label{sec_intro}
Massive Multiple-input-multiple-output (MIMO) systems, enabled by using antenna arrays with many elements at the transmitter (TX) and/or receiver (RX), promise large beamforming gains and improved spectral efficiency, and are therefore a key focus area for 5G systems research and development \cite{Marzetta, Keytech_5G}. Such massive antenna arrays, while also beneficial at sub-$6$ GHz frequencies, are essential at the higher millimeter-wave (mm-wave) frequencies to compensate for the large channel attenuation. However, despite their numerous benefits, full complexity massive MIMO architectures suffer from increased hardware cost and energy consumption. This is because, though the antenna elements are affordable, the corresponding up/down-conversion chains - which include circuit components such as analog-to-digital converters  and digital-to-analog converters - are both expensive and power hungry \cite{Murmann_ADC_compiled}. A popular solution to reduce this implementation cost is hybrid beamforming \cite{Heath2016, Molisch_HP_mag}, where the large antenna array is connected to a small number of up/down-conversion chains via power-efficient and cost-effective analog hardware, such as, phase-shifters. By using such analog hardware to focus power into the dominant channel directions, hybrid beamforming exploits the directional nature of wireless channels to minimize loss in system performance. In this paper, we focus on a special case of hybrid beamforming with one up/down-conversion chain (for the in-phase and quadrature-phase components each), referred to as analog beamforming.

A major challenge with analog beamforming (and also hybrid beamforming in general) is the acquisition of the channel state information (CSI) required for beamforming at the TX and RX. In narrow-band (i.e., frequency-flat fading) systems \cite{Molisch_VarPhaseShift, Ayach_iCSI, Xu_iCSI, Alkhateeb2015, Sohrabi2016, Vishnu_ICC2017, Ratnam_HBwS_jrnl}, the required CSI usually involves instantaneous channel parameters (iCSI), while in wide-band systems \cite{Sudarshan, Adhikary_JSDM, Liu2014, Caire2017, Zheda2017, Park2017} average channel parameters (aCSI) are used for designing the analog beamformer. Here aCSI refers to channel parameters that remain constant over a wide time-frequency range, such as the spatial correlation matrices, while iCSI are parameters that change faster. 
In either scenario, the required CSI can be obtained by transmitting known signals (pilots) and performing channel estimation (CE) at the RX within each CSI coherence time, i.e., period over which CSI remains constant.\footnote{Required CSI at the TX is obtained either via CE on the reverse link, or via CSI feedback from RX. \label{note3}} 
Since all RX antennas share one down-conversion chain, multiple temporal pilot transmissions are required for performing such CE \cite{Alkhateeb2014, Jeong2015, Vishnu_Globecom, Caire2017}. 
As an illustration, exhaustive CE approaches \cite{Jeong2015} require ${\rm O}(M_{\rm tx} M_{\rm rx})$ pilots, where $M_{\rm tx}, M_{\rm rx}$ are the number of TX and RX antennas, respectively and ${\rm O}(\cdot)$ represents the scaling behavior in big oh notation. 
Such a large pilot overhead may consume a significant portion of the time-frequency resources when the CSI coherence time is short, such as in vehicle-to-vehicle channels, in systems using narrow TX/RX beams, e.g., massive MIMO, or in channels with large carrier frequencies and high blocking probabilities, e.g., at mm-wave frequencies \cite{Akdeniz2014}. 
The overhead also increases system latency and makes the initial access\footnote{Initial access refers to the phase wherein, a user equipment and base-station discover each other, synchronize, and coordinate to initiate communication.} (IA) procedure very cumbersome \cite{Barati2015, Li2016, Giordani2016_iter}. Several fast CE approaches have therefore been suggested to reduce the pilot overhead, which are discussed below assuming $M_{\rm tx}=1$ for convenience.\footnote{For $M_{\rm tx} > 1$, the pilot overhead increases further, either multiplicatively or additively, by a function of $M_{\rm tx}$, determined by the CE algorithm used at the TX. \label{note1}}
Side information aided narrow-band CE approaches utilize channel statistics and temporal correlation to reduce the iCSI pilot overhead \cite{Love2014, Adhikary2014, Ratnam_HBwS_jrnl, Vishnu_Globecom}. Compressed sensing based approaches \cite{Chi2013, Alkhateeb2014, Park2016_asilomar, Lee2016} exploit the sparse nature of the massive MIMO channels to reduce the pilots up to ${\rm O}[L \log(M_{\rm rx}/L)]$ per CSI coherence time, where $L$ is the channel sparsity level. Iterative angular domain CE performs beam sweeping at the RX with progressively narrower search beams to find a good beam direction with ${\rm O}(\log M_{\rm rx})$ pilots \cite{kim2014fast, Desai2014, Giordani2016_iter}. Approaches that utilize side information to improve iterative angular domain CE \cite{Giordani2016_context, Devoti2016} or perform angle domain tracking \cite{Gao2017, Li2017} have also been considered. Sparse ruler based approaches exploit the possible Toeplitz structure of the spatial correlation matrix to reduce pilots to ${\rm O}(\sqrt{M_{\rm rx}})$ per CSI coherence time \cite{Pal2010, Pal2011, Romero2013, Rial2015_icassp, Caire2017}. Since the overhead still scales with $M_{\rm rx}$\footref{note1}, these approaches are only partially successful in reducing the pilot overhead. Furthermore, some of these CE approaches may not be applicable for IA since they would require the timing and frequency synchronization \cite{Nasir2016, Meng2017} to be performed without the TX/RX beamforming gain, which may be difficult at the low signal-to-noise ratio (SNR) and high phase noise (i.e., random fluctuations of the instantaneous oscillator frequency) levels expected in mm-wave systems. Some of these CE approaches also require the channel to remain static during the re-transmissions and are only applicable for certain antenna configurations and/or channel models. 
Finally, to reduce the impact of the transient effects of analog hardware on CE \cite{Sands2002}, the multiple pilots may have to be spaced sufficiently far apart\cite{Venugopal2017}, thus potentially increasing the latency. 

The main reason for the pilot overhead is that conventional CE approaches require processing in the digital domain, thus having to time-share the down-conversion chain across the antennas. Inspired by ultra-wideband transmit reference schemes \cite{Hoctor2002, Goeckel2007, Ratnam_Globecom17} and legacy adaptive antenna array techniques \cite{Breese1961, Ghose1964, Thompson1976, Golliday1982}, our recent conference papers \cite{Ratnam_ICC2018, Ratnam_Globecom2018}, explore a different novel approach that enables CE without digital processing. In this approach, the TX transmits a reference sinusoidal tone simultaneously with the data. The received reference signals (including both amplitude and phase) are then recovered at each RX antenna via analog hardware and are utilized as a homodyne combining filter for the data. In essence, \cite{Ratnam_ICC2018, Ratnam_Globecom2018} show that a maximal ratio combining (MRC) beamformer built for a reference frequency also provides a good, albeit sub-optimal, beamforming gain at other frequencies in a sparse scattering, wide-band channel. This is because, although they experience frequency selective fading, such channels exhibit a strong coupling across frequency. 
Since recovering a reference sinusoidal signal, or equivalently estimating its amplitude and phase, is significantly simpler than conventional CE, it can be performed at each RX antenna by analog hardware such as phase locked loops. Thus, by avoiding digital CE, this scheme allows RX beamforming without pilot re-transmissions. We shall henceforth refer to this type of amplitude and phase estimation as \emph{analog channel estimation} (ACE). 
Note that due to the limited capabilities of analog hardware and the low SNR before beamforming, performing ACE and exploring new ACE techniques is non-trivial. In the original design in \cite{Ratnam_ICC2018}, the reference has to be transmitted continuously, to enable its recovery at the RX. 
While this design reduces the estimation overhead and avoids phase-shifters, it requires $M_{\rm rx}$ carrier recovery circuits which may add to the cost and power consumption of the RX. Furthermore, the continuous recovery of the reference tone is an overkill, and may cause some wastage in the transmit power and spectral efficiency. 
In \cite{Ratnam_Globecom2018}, a non-coherent variant of \cite{Ratnam_ICC2018} is explored that avoids recovery circuits but at the expense of $50\%$ bandwidth efficiency reduction. The current paper therefore proposes a different ACE scheme, referred to as periodic ACE (PACE), where the reference is transmitted judiciously, and its amplitude and phase are explicitly estimated to drive an RX phase shifter array. Unlike \cite{Ratnam_ICC2018}, PACE requires one carrier recovery circuit and $M_{\rm rx}$ phase shifters (see Fig.~\ref{Fig_block_diag}) and can support both homo/heter-dyne reception. 

In PACE, the TX transmits a reference tone at a known frequency during each periodic RX beamformer update phase. One carrier recovery circuit, involving phase-locked loops (PLLs), is used to recover the reference tone from one or more antennas, as shown in Fig.~\ref{Fig_block_diag}. This recovered reference tone, and its quadrature component, are then used to estimate the phase off-set and amplitude of the received reference tone at each RX antenna, via a bank of `filter, sample and hold' circuits (represented as integrators in Fig.~\ref{Fig_block_diag}). As shall be shown, these estimates are proportional to the channel response at the reference frequency. 
These estimates are used to control an array of variable gain phase-shifters, which generate the RX analog beam. During the data transmission phase, the wide-band received data signals pass through these phase-shifters, are summed and processed similar to conventional analog beamforming. As the phase and amplitude estimation is done in the analog domain, ${\rm O}(1)$ pilots are sufficient to update the RX beamformer. 
Additionally, the power from multiple channel MPCs is accumulated by this approach, increasing the system diversity against MPC blocking. Furthermore, the same variable gain phase-shifts can also be used for transmit beamforming on the reverse link. Finally, by providing an option for digitally controlling the inputs to the phase-shifters, the proposed architecture can also support conventional beamforming approaches.  

On the flip side, PACE requires some additional analog hardware components, such as mixers and filters, in comparison to conventional digital CE. Additionally, the accumulation of power from multiple MPCs may cause frequency selective fading in a wide-band scenario, which can degrade performance. Finally, the proposed approach in its current suggested form does not support reception of multiple spatial data streams and can only be used for beamforming at one end of a communication link. This architecture is therefore more suitable for use at the user equipment (UEs). The possible extensions to multiple spatial stream reception shall be explored in future work. While the proposed architecture is also applicable in narrow-band scenarios, in this paper we shall focus on the analysis of a wide-band scenario where the repetition interval of PACE and beamformer update is of the order of aCSI coherence time, i.e. time over which the aCSI stays approximately constant (also called stationarity time in some literature). 

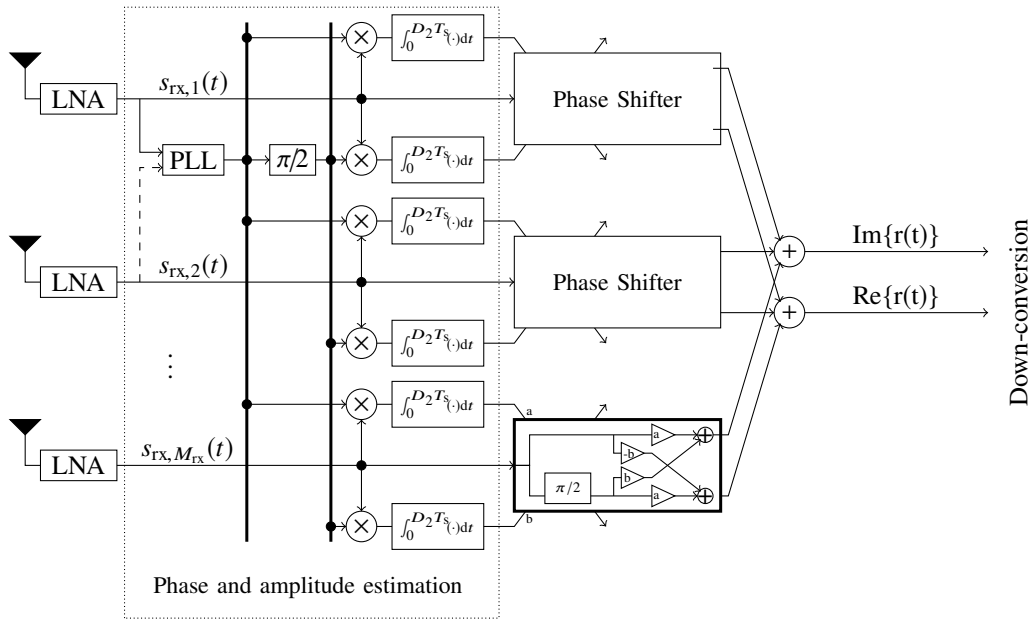
\begin{figure*}[h] 
\setlength{\unitlength}{0.08in} 
\centering 
\begin{tikzpicture}[x=0.08in,y=0.08in] 
\draw[line width = 0.4 mm] (14.5,0) -- (14.5,34); \draw[line width = 0.4 mm] (20,0) -- (20,34); 
\filldraw[color=black, fill=black](0,7) (0,7) -- (1,8) -- (-1,8) -- cycle;
\draw (0,7) -- (0,5) -- (1,5); \draw(1,4) rectangle (6,6) node[pos=.5] {LNA}; \draw[->] (6,5) -- (32,5); \draw(10.6,6) node{$s_{{\rm rx},M_{\rm rx}}(t)$}; \draw[->] (22,5) -- (22,2);
\filldraw[fill=black] (20,1) circle(0.3); \draw[->] (20,1) -- (21,1); 
\draw (22,1) circle(1); \draw(22,1) node{$\times$}; \draw (23,1) -- (24,1); \draw(24,-0.5) rectangle (30,2.5) node[pos=.5] {\tiny $\int_{0}^{D_2 T_{\rm s}} \!\!(\cdot) {\rm d}t$}; 
\filldraw[fill=black] (14.5,9) circle(0.3); \draw[->] (14.5,9) -- (21,9); \draw[->] (22,5) -- (22,8); \filldraw[fill=black] (22,5) circle(0.3);
\draw (22,9) circle(1); \draw(22,9) node{$\times$}; \draw (23,9) -- (24,9); \draw(24,7.5) rectangle (30,10.5) node[pos=.5] {\tiny $\int_{0}^{D_2 T_{\rm s}} \!\! (\cdot) {\rm d}t$}; 
\draw[->] (30,1) -- (32,1) -- (38,9); \draw[->] (30,9) -- (32,9) -- (38,1); 
\draw[fill = white, line width = 0.4 mm] (32,2) rectangle (45.5,8); \draw(33,8.5) node{\tiny a}; \draw(33,1.5) node{\tiny b};
\draw (32,5) -- (33,5) -- (33,7) -- (41,7); 
\draw (41,6.3) -- (41,7.7) -- (42.5,7) -- cycle; \draw(41.3,7) node{\tiny a}; \draw (38.5,7) -- (38.5, 5.8) -- (39, 5.8); \draw (39,5.1) -- (39,6.5) -- (40.5,5.8) -- cycle; \draw(39.5,5.8) node{\tiny -b};
\draw[->] (42.5,7) -- (44,7); \draw(44.5,7) circle(0.5); \draw(44.5,7) node{+}; \draw[->] (40.5,4.2) -- (41,4.2) -- (44.2,6.7); \draw[->] (45,7) -- (46,7) -- (49.4,18.3);  
\draw (32,5) -- (33,5) -- (33,3) -- (34,3); \draw(34,2.5) rectangle (37,4.5) node[pos=.5] {\tiny $\pi/2$}; \draw(37,3) -- (41,3);
\draw (41, 2.3) -- (41,3.7) -- (42.5,3) -- cycle; \draw(41.3,3) node{\tiny a}; \draw(38.5,3) -- (38.5, 4.2) -- (39, 4.2); \draw (39,4.9) -- (39,3.5) -- (40.5,4.2) -- cycle; \draw(39.4,4.2) node{\tiny b}; \draw[->] (42.5,3) -- (44,3); \draw(44.5,3) circle(0.5); \draw(44.5,3) node{+}; \draw[->] (40.5,5.8) -- (41,5.8) -- (44.2,3.3); \draw[->] (45.5,3) -- (46,3) -- (49.4,14.3); 
\draw(9.5,12) node{\vdots}; 
\filldraw[color=black, fill=black](0,19) (0,19) -- (1,20) -- (-1,20) -- cycle;
\draw (0,19) -- (0,17) -- (1,17); \draw(1,16) rectangle (6,18) node[pos=.5] {LNA}; \draw[->] (6,17) -- (32,17); \draw(11,18) node{$s_{{\rm rx},2}(t)$};  \draw[->] (22,17) -- (22,14); 
\filldraw[fill=black] (20,13) circle(0.3); \draw[->] (20,13) -- (21,13); \draw (22,13) circle(1); \draw(22,13) node{$\times$}; \draw (23,13) -- (24,13); \draw(24,11.5) rectangle (30,14.5) node[pos=.5] {\tiny $\int_{0}^{D_2T_{\rm s}} \!\! (\cdot) {\rm d}t$}; 
\filldraw[fill=black] (14.5,21) circle(0.3); \draw[->] (14.5,21) -- (21,21); \draw[->] (22,17) -- (22,20); \filldraw[fill=black] (22,17) circle(0.3); \draw (22,21) circle(1); \draw(22,21) node{$\times$}; \draw (23,21) -- (24,21); \draw(24,19.5) rectangle (30,22.5) node[pos=.5] {\tiny $\int_{0}^{D_2 T_{\rm s}} \!\!(\cdot) {\rm d}t$}; 
\draw[->] (30,13) -- (32,13) -- (38,21); \draw[->] (30,21) -- (32,21) -- (38,13); \draw[fill = white](32,14) rectangle (45.5,20) node[pos=.5] {\small Phase Shifter}; \draw[->] (45.5,15) -- (49,15); \draw[->] (45.5,19) -- (49,19);
\draw[dashed,->] (7.5,17) -- (7.5,24.5) -- (9,24.5);
\draw[->] (7.5,29) -- (7.5,25.5) -- (9,25.5); \draw(9,24) rectangle (13,26) node[pos=.5] {PLL}; \draw[->] (13,25) -- (16,25); \draw(16,24) rectangle (19,26) node[pos=.5] {$\pi \!/\! 2$}; \draw(19,25) -- (20,25); \filldraw[fill=black] (14.5,25) circle(0.3); 
\filldraw[color=black, fill=black](0,31) (0,31) -- (1,32) -- (-1,32) -- cycle;
\draw (0,31) -- (0,29) -- (1,29); \draw(1,28) rectangle (6,30) node[pos=.5] {LNA}; \draw[->] (6,29) -- (32,29); \draw(11,30) node{$s_{{\rm rx},1}(t)$}; \filldraw[fill=black] (20,25) circle(0.3); \draw[->] (20,25) -- (21,25); \draw (22,25) circle(1); \draw(22,25) node{$\times$}; \draw (23,25) -- (24,25);  \draw(24,23.5) rectangle (30,26.5) node[pos=.5] {\tiny $\int_{0}^{D_2 T_{\rm s}} \!\!(\cdot) {\rm d}t$}; 
\filldraw[fill=black] (14.5,33) circle(0.3); \draw[->] (14.5,33) -- (21,33); \draw[->] (22,29) -- (22,32); \draw[->] (22,29) -- (22,26); \filldraw[fill=black] (22,29) circle(0.3); \draw (22,33) circle(1); \draw(22,33) node{$\times$}; \draw (23,33) -- (24,33); \draw(24,31.5) rectangle (30,34.5) node[pos=.5] {\tiny $\int_{0}^{D_2 T_{\rm s}} \!\! (\cdot) {\rm d}t$}; 
\draw[->] (30,25) -- (32,25) -- (38,33); \draw[->] (30,33) -- (32,33) -- (38,25); \draw[fill=white](32,26) rectangle (45.5,32) node[pos=0.5]{\small Phase Shifter}; 
\draw[->] (45,31) -- (46,31) -- (49.4,19.7); \draw[->] (45,27) -- (46,27) -- (49.4,15.7); 
\draw(50,19) circle(1); \draw(50,19) node{$+$}; \draw[->] (51,19) -- (63,19); \draw(57,20) node{$\rm{Im}\{r(t)\}$};
\draw(50,15) circle(1); \draw(50,15) node{$+$}; \draw[->] (51,15) -- (63,15); \draw(57,16) node{$\rm{Re}\{r(t)\}$};
\draw[densely dotted] (6.5,-5) rectangle (31,35); \draw(18.5,-3) node{\small Phase and amplitude estimation}; 
\draw(65,15) node[rotate=90] {Down-conversion};
\end{tikzpicture}
\caption{Block diagram of an RX with analog beamforming enabled via periodic analog channel estimation.} 
\label{Fig_block_diag} 
\end{figure*} 
 
The contributions of this paper are as follows:
\begin{enumerate}
\item We propose a novel transmission technique, namely PACE, and a corresponding RX architecture that enable RX analog beamforming with low CE overhead. 
\item To enable the RX operation, we also explore two novel reference recovery circuits. These circuits are non-linear, making their analysis non-trivial. We provide an approximate analysis of their phase-noise and the resulting performance that is tight in the high SNR regime. 
\item We analytically characterize the achievable system throughput with PACE aided beamforming in a wide-band channel. 
\item Simulations with practically relevant channel models are used to support the analytical results and compare performance to existing schemes.
\end{enumerate}
The organization of the paper is as follows: the system model is presented Section \ref{sec_chan_model}; two designs for PACE and their respective noise analysis is presented in Section \ref{sec_analog_phase_amp_est}; the system performance with PACE aided beamforming is characterized in Section \ref{sec_data_transmission}; the advantages of PACE for transmit beamforming and during the IA phase are discussed in Section \ref{sec_IA_aCSI_at_BS}; simulations results are presented in Section \ref{sec_sim_results} and finally conclusions are in Section \ref{sec_conclusions}. 

\textbf{Notation:} scalars are represented by light-case letters; vectors by bold-case letters; and sets by calligraphic letters. Additionally, ${\rm j} = \sqrt{-1}$, $a^*$ is the complex conjugate of a complex scalar $a$, $|\mathbf{a}|$ represents the $\ell_2$-norm of a vector $\mathbf{a}$ and ${\mathbf{A}}^{\dag}$ is the conjugate transpose of a complex matrix $\mathbf{A}$. Finally, $\mathbb{E}\{\}$ represents the expectation operator, $\otimes$ represents the Kronecker product, $\stackrel{\rm d}{=}$ represents equality in distribution, $\mathrm{Re}\{\cdot\}$/$\mathrm{Im}\{\cdot\}$ refer to the real/imaginary component, respectively, $\mathcal{CN}(\mathbf{a},\mathbf{B})$ represents a circularly symmetric complex Gaussian vector with mean $\mathbf{a}$ and covariance matrix $\mathbf{B}$, $\mathrm{Exp}\{a\}$ represents an exponential distribution with mean $a$ and $\mathrm{Uni}\{a,b\}$ represents a uniform distribution in range $[a,b]$. 

\section{General Assumptions and System model} \label{sec_chan_model}
We consider the downlink of a single-cell MIMO system, wherein one base station (BS) with $M_{\rm tx}$ antennas transmits to several UEs with $M_{\rm rx}$ antennas each.  Since  focus is on the downlink, we shall use abbreviations BS \& TX and UE \& RX interchangeably. Each UE is assumed to have one up/down-conversion chain, while no assumptions are made regarding the BS architecture. 
\begin{figure}[!htb]
\centering
\includegraphics[width= 0.48\textwidth]{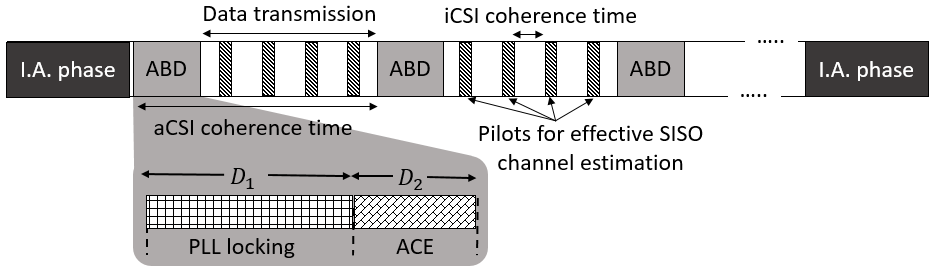}
\caption{An illustrative transmission block structure for the PACE scheme.}
\label{Fig_tx_protocol}
\end{figure}
Here we assume the communication between the BS and UEs to involve three important phases: (i) initial access (IA) - where the BS and UEs find each other, timing/frequency synchronization is attained and spectral resources are allocated; (ii) analog beamformer design - where the BS and UEs obtain the required aCSI to update the analog precoding/combining beams; and (iii) data transmission. 
The relative time scale of these phases are illustrated in Fig.~\ref{Fig_tx_protocol}. Through most of this paper (Sections \ref{sec_chan_model}-\ref{sec_data_transmission}), we assume that the IA and beamformer design at the BS are already achieved, and we mainly focus on the beamformer design phase at the UE and the data transmission phase. Therefore we assume perfect timing and frequency synchronization between the BS and UE, and assume that the TX beamforming has been pre-designed based on aCSI at the BS. Later in Section \ref{sec_IA_aCSI_at_BS}, we also briefly discuss how aCSI can be acquired at the BS, how IA can be performed and how the use of PACE can be advantageous in those phases. 

The BS transmits one spatial data-stream to each scheduled UE, and all such scheduled UEs are served simultaneously via spatial multiplexing. Furthermore, the data to the UEs is assumed to be transmitted via orthogonal precoding beams, such that, there is no inter-user interference.\footnote{This type of precoding is possible by avoiding transmission to the scatterers common to multiple scheduled UEs \cite{Adhikary2014}.} Under these assumptions and given transmit precoding beams and power allocation, we shall restrict the analysis to one representative UE without loss of generality. For convenience, we shall also assume the use of noise-less and perfectly linear antennas, filters, amplifiers and mixers at both the BS and UE. An analysis including the non-linear effects of these components is beyond the scope of this paper. The BS transmits orthogonal frequency division multiplexing (OFDM) symbols with $K$ sub-carriers, indexed as $\mathcal{K} = \{-K_1,...,K_2-1,K_2\}$ with $K_1+K_2+1 = K$, to this representative UE.\footnote{While the proposed PACE technique is also applicable to single carrier transmission, a detailed analysis of the same is beyond the scope of this paper.} 
The BS transmits two kinds of symbols: reference symbols and data symbols. In a reference symbol, only a reference tone, i.e., a sinusoidal signal with a pre-determined frequency known both to the BS and UE, is transmitted on the $0$-th subcarrier, and the remaining sub-carriers are all empty. 
On the other hand, in a data symbol all the $K$ sub-carriers are used for data transmission.\footnote{In an actual implementation the data symbols may have may also have null and pilot sub-carriers, but we ignore them here for simplicity.} The purpose of the reference symbols is to aid PACE and beamformer design at the RX, as shall be explained later. Since the BS can afford an accurate oscillator, we shall assume that the BS suffers negligible phase noise. The $M_{\rm tx} \times 1$ \emph{complex equivalent} transmit signal for the $0$-th symbol, if it is a reference or data symbol, respectively, can then be expressed as:
\begin{subequations} \label{eqn_tx_signal}
\begin{eqnarray} 
\tilde{\mathbf{s}}^{(\rm r)}_{\rm tx}(t) &=& \sqrt{\frac{2}{T_{\rm cs}}} \mathbf{t} \sqrt{E^{(\rm r)}} e^{{\rm j} 2 \pi f_{\rm c} t } \label{eqn_tx_signal_ref} \\
\tilde{\mathbf{s}}^{(\rm d)}_{\rm tx}(t) &=& \sqrt{\frac{2}{T_{\rm cs}}} \mathbf{t}\bigg[ \sum_{k \in \mathcal{K}} x^{(\rm d)}_k e^{{\rm j} 2 \pi f_k t} \bigg] e^{{\rm j} 2 \pi f_{\rm c} t }, \label{eqn_tx_signal_data}
\end{eqnarray}
\end{subequations}
for $-T_{\rm cp} \leq t \leq T_{\rm s}$, where $\mathbf{t}$ is the $M_{\rm tx}\times 1$ unit-norm TX beamforming vector for this UE with $|\mathbf{t}|=1$, $x^{(\rm d)}_k$ is the data signal at the $k$-th OFDM sub-carrier, ${\rm j} = \sqrt{-1}$, $f_{\rm c}$ is the carrier/reference frequency, $f_k = k/T_{\rm s}$ represents the frequency offset of the $k$-th sub-carrier, $T_{\rm cs}= T_{\rm cp} + T_{\rm s}$ and $T_{\rm s}, T_{\rm cp}$ are the symbol duration and the cyclic prefix duration, respectively. Here we define the \emph{complex equivalent} signal such that the actual (real) transmit signal is given by $\mathbf{s}^{(\cdot)}_{\rm tx}(t) = \mathrm{Re}\{\tilde{\mathbf{s}}^{(\cdot)}_{\rm tx}(t)\}$. For the data symbols, we assume the use of Gaussian signaling with $E^{(\rm d)}_{k} = \mathbb{E}\{{|x_k|}^2\}$, for each $k \in \mathcal{K}$.
The total average transmit OFDM symbol energy (including cyclic prefix) allocated to the UE is defined as $E_{\rm cs}$, where $E_{\rm cs} \geq E^{(\rm r)}$ and $E_{\rm cs} \geq \sum_{k \in \mathcal{K}} E^{(\rm d)}_{k}$.  
For convenience we also assume that $f_{\rm c}$ is a multiple of $1/T_{\rm cs}$, which ensures that the reference tone has the same initial phase in consecutive reference symbols. 

The channel to the representative UE is assumed to be sparse with $L$ resolvable MPCs ($L \ll M_{\rm tx}, M_{\rm rx}$), and the corresponding $M_{\rm rx} \times M_{\rm tx}$ channel impulse response matrix is given as \cite{Akdeniz2014}:
\begin{eqnarray} \label{eqn_channel_impulse_resp}
\mathbf{H}(t) = \sum_{\ell=0}^{L-1} \alpha_{\ell} \mathbf{a}_{\rm rx}(\ell) {\mathbf{a}_{\rm tx}(\ell)}^{\dag} \delta(t - \tau_{\ell}),
\end{eqnarray}
where $\alpha_{\ell}$ is the complex amplitude and $\tau_{\ell}$ is the delay and $\mathbf{a}_{\rm tx}(\ell), \mathbf{a}_{\rm rx}(\ell)$ are the TX and RX array response vectors, respectively, of the $\ell$-th MPC. As an illustration, the $\ell$-th RX array response vector for a uniform planar array with $M_{\rm H}$ horizontal and $M_{\rm V}$ vertical elements ($M_{\rm rx} = M_{\rm H} M_{\rm V}$) is given by $\mathbf{a}_{\rm rx}(\ell) = \bar{\mathbf{a}}_{\rm rx}\big(\psi^{\rm rx}_{\rm azi}(\ell), \psi^{\rm rx}_{\rm ele}(\ell)\big)$, where we define: 
\begin{flalign} \label{eqn_array_response_planar}
& \bar{\mathbf{a}}_{\rm rx}(\psi^{\rm rx}_{\rm azi}, \psi^{\rm rx}_{\rm ele}) \triangleq & \nonumber \\
& \quad \left[\begin{array}{c} 1 \\ e^{{\rm j} 2 \pi \frac{\Delta_{\rm H} \sin(\psi^{\rm rx}_{\rm azi})\sin(\psi^{\rm rx}_{\rm ele})}{\lambda}} \\ \hdots \\ e^{{\rm j} 2 \pi \frac{\Delta_{\rm H} (M_{\rm H}-1) \sin(\psi^{\rm rx}_{\rm azi})\sin(\psi^{\rm rx}_{\rm ele})}{\lambda}}\end{array}\right] \otimes \left[\begin{array}{c} 1 \\ e^{{\rm j} 2 \pi \frac{\Delta_{\rm V} \cos(\psi^{\rm rx}_{\rm ele})}{\lambda}} \\ \hdots \\ e^{{\rm j} 2 \pi \frac{\Delta_{\rm V} (M_{\rm V}-1) \cos(\psi^{\rm rx}_{\rm ele})}{\lambda}}
\end{array}\right], \!\!\!\!\!\!\!&
\end{flalign}
$\psi^{\rm rx}_{\rm azi}(\ell)$, $\psi^{\rm rx}_{\rm ele}(\ell)$ are the azimuth and elevation angles of arrival for the $\ell$-th MPC, $\Delta_{\rm H}, \Delta_{\rm V}$ are the horizontal and vertical antenna spacings and $\lambda$ is the wavelength of the carrier signal. Expressions for $\mathbf{a}_{\rm tx}(\ell)$ can be obtained similarly. 
Note that in \eqref{eqn_channel_impulse_resp} we implicitly assume frequency-flat MPC amplitudes $\{\alpha_{0},..,\alpha_{L-1}\}$ and ignore beam squinting effects \cite{Garakoui2011}, which are reasonable assumptions for moderate system bandwidths. 
To prevent inter symbol interference, we also let the cyclic prefix be longer than the maximum channel delay: $T_{\rm cp} > \tau_{L-1}$. 
To model a time varying channel, we treat $\{\alpha_{\ell}, \mathbf{a}_{\rm tx}(\ell), \mathbf{a}_{\rm rx}(\ell)\}$ as aCSI parameters, that remain constant within an aCSI coherence time and may change arbitrarily afterwards.\footnote{While each MPC may contain several unresolved sub-paths, the corresponding set of scatterers are usually co-located. Therefore the relative sub-path delays and resulting MPC amplitude $\alpha_{\ell}$ are expected to vary slowly with the TX/RX movement.} However since the channel is more sensitive to delay variations, the MPC delays $\{\tau_{0},..., \tau_{L-1}\}$ are modeled as iCSI parameters that only remain constant within a shorter interval called the iCSI coherence time. Note that this time variation of delays is an equivalent representation of the Doppler spread experienced by the RX. 
Finally, we do not assume any distribution prior or side information on $\{\alpha_{\ell}, \mathbf{a}_{\rm tx}(\ell), \mathbf{a}_{\rm rx}(\ell),\tau_{\ell}\}$. 

The RX front-end is assumed to have a low noise amplifier followed by a band-pass filter at each antenna element that leaves the desired signal un-distorted but suppresses the out-of-band noise. The $M_{\rm rx} \times 1$ filtered \emph{complex equivalent} received waveform for the $0$-th symbol can then be expressed as: 
\begin{eqnarray}
\tilde{\mathbf{s}}^{(\cdot)}_{\rm rx}(t) = \sum_{\ell=0}^{L-1} \alpha_{\ell} \mathbf{a}_{\rm rx}(\ell) {\mathbf{a}_{\rm tx}(\ell) }^{\dag} \tilde{\mathbf{s}}^{(\cdot)}_{\rm tx}(t-\tau_{\ell}) + \sqrt{2} \tilde{\mathbf{w}}^{(\cdot)}(t) e^{{\rm j} 2 \pi f_{\rm c} t} \label{eqn_defn_rx_signal}
\end{eqnarray}
for $0 \leq t\leq T_{\rm s}$, where $(\cdot) = ({\rm r})\big/({\rm d})$, $\tilde{\mathbf{w}}^{(\cdot)}(t)$ is the $M_{\rm rx}\times 1$ complex equivalent, base-band, stationary, additive, vector Gaussian noise process, with individual entries being circularly symmetric, independent and identically distributed (i.i.d.), and having a power spectral density: $\mathcal{S}_{\rm w}(f) = \mathrm{N_0}$ for $-f_{K_1} \leq f \leq f_{K_2}$. During the data transmission phase, the $M_{\rm rx} \times 1$ received data waveform $\tilde{\mathbf{s}}^{(\rm d)}_{\rm rx}(t)$ is phase shifted by a bank of phase-shifters, whose outputs are summed and fed to a down-conversion chain for data demodulation, as in conventional analog beamforming. However unlike conventional CE based analog beamforming, the control signals to the phase-shifters are obtained using the reference symbols $\tilde{\mathbf{s}}^{(\rm r)}_{\rm rx}(t)$ and using PACE, as shall be discussed in the next section.

\section{Analog beamformer design at the receiver} \label{sec_analog_phase_amp_est}
During each beamformer design phase, the BS transmits $D$ consecutive reference symbols to facilitate PACE at the RX. This process involves two steps: locking a local RX oscillator to the received reference tone and using this locked oscillator to estimate the amplitude and phase-offsets at each antenna.\footnote{Note that IA based time/frequency synchronization usually involves digital post-processing. Thus prior IA based synchronization does not guarantee that an RX oscillator is locked to the reference tone.} Here locking refers to ensuring that the phase difference between the oscillator and the received reference tone is approximately constant. The first $D_1$ reference symbols are used for the former step and the remaining $D_2=D-D_1$ symbols are used for the latter step. Therefore $D$ is independent of $M_{\rm rx}$ and is mainly determined by the time required for oscillator locking (see Remark \ref{Rem_PLL_parameters}). 
The first step shall be referred to as recovery of the reference tone and is analyzed in Section \ref{subsec_pll_analysis} and while the latter step is discussed in Section \ref{subsec_phase_amp_estimate}. As shall be shown both steps are significantly impaired by channel noise. Therefore in Section \ref{subsec_carrier_arraying}, we propose an improved architecture for reference tone recovery that provides better noise performance, albeit with a slightly higher hardware complexity. For convenience, we shall assume that the MPC delays do not change within the beamformer design phase, and are represented as $\{\hat{\tau}_{0},..., \hat{\tau}_{L-1}\}$ (see also Remark \ref{Rem_MPC_delay_changes}). However the delays may be different during the data transmission phase, as shall be considered in Section \ref{sec_data_transmission}. 
Without loss of generality, assuming the first reference symbol to be the $0$-th OFDM symbol, the complex equivalent RX signal for the $D$ reference symbols at antenna $m$ can be expressed as:\footnote{The component of $\tilde{\mathbf{s}}^{(\cdot)}_{\rm rx}(t)$ for $-T_{\rm cp} \leq t \leq 0$ suffers inter-symbol interference and hence is not included here. \label{note2}}
\begin{eqnarray}
\tilde{s}^{(\rm r)}_{{\rm rx},m}(t) &=& \sqrt{2} A_m^{(\rm r)} e^{{\rm j} 2 \pi f_{\rm c} t} + \sqrt{2} \tilde{w}_{m}^{(\rm r)}(t) e^{{\rm j} 2 \pi f_{\rm c} t} \label{eqn_PLL_input}
\end{eqnarray}
for $0 \leq t \leq D T_{\rm cs} - T_{\rm cp}$, where $A_m^{(\rm r)} \triangleq \sum_{\ell=0}^{L-1} \sqrt{\frac{1}{T_{\rm cs}}} \alpha_{\ell} {[\mathbf{a}_{\rm rx}(\ell)]}_{m} {\mathbf{a}_{\rm tx}(\ell) }^{\dag} \mathbf{t} \sqrt{E^{(\rm r)}} e^{-  {\rm j} 2 \pi f_{\rm c} \hat{\tau}_{\ell}}$ is the amplitude of the reference tone at antenna $m$.

\subsection{Recovery of the reference tone - using one PLL} \label{subsec_pll_analysis}
For locking a local RX oscillator to the reference signal, we first consider the use of a type 2 analog PLL at RX antenna $1$, as illustrated in Fig.~\ref{Fig_PLL_complete}. 
The PLL is a common carrier-recovery circuit - with a mixer, a loop low pass filter ($\mathrm{LF}$) a variable loop gain ($G$) and a voltage controlled oscillator ($\mathrm{VCO}$) arranged in a feedback mechanism - that can filter the noise from an input noisy sinusoidal signal (see \cite{Gupta1975, Viterbi_book} for more details). 
\begin{figure}[h] 
\setlength{\unitlength}{0.08in} 
\centering 
\subfloat[Block diagram]{\label{Fig_PLL_block_diag} \begin{tikzpicture}[x=0.08in,y=0.08in] 
\filldraw[color=black, fill=black](-8,12) (-8,12) -- (-7,13) -- (-9,13) -- cycle;
\draw (-8,12) -- (-8,10) -- (-7,10); \draw(-7,9) rectangle (-3,11) node[pos=.5] {\small LNA}; \draw[->] (-3,10) -- (6,10); \draw(0,11) node{$s_{{\rm rx},1}(t)$}; 
\draw(7,10) circle(1); \draw(7,10) node{$\times$}; \draw[->](8,10) -- (11,10); \draw(11,8) rectangle (15,12) node[pos=.5] {\small LF}; \draw[->] (15,10) -- (18,10); \draw[->] (18,8) -- (21,12); \draw[fill=white] (18,8.5) -- (18,11.5) -- (21,10) -- cycle; \draw(18.9,10) node{$G$}; 
\draw[->] (21,10) -- (23,10) -- (23,5) -- (15,5); 
\draw(11,3) rectangle (15,7) node[pos=.5] {\small VCO}; \draw[->] (11,5) -- (7,5) -- (7,9); 
\draw[->] (7,5)--(3,5); \draw(8,4) node{$s_{\rm vco}(t)$}; \draw[dashed] (4,2) rectangle (24,13); \draw(-1,5) node{$s_{\rm PLL}(t)$};
\end{tikzpicture}} \hspace{15 mm}
\subfloat[Sample input/output]{\includegraphics[width= 0.25\textwidth]{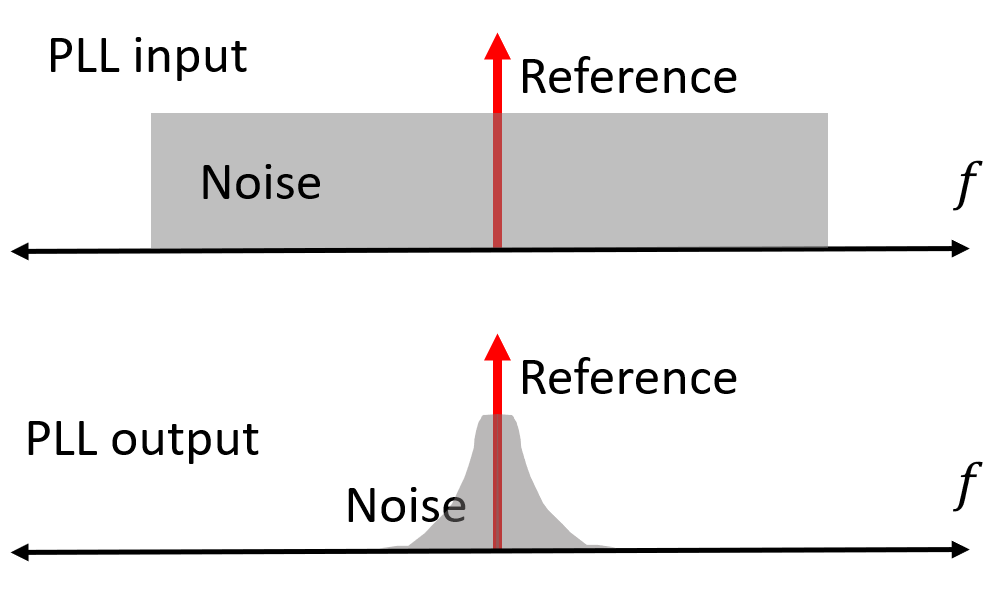} \label{Fig_PLL_illustrate}}
\caption{Block diagram of the PLL at antenna 1 for reference recovery, and a sample illustration of its output. \label{Fig_PLL_complete}} 
\end{figure} 
Here $\mathrm{LF}$ is assumed to be a first-order active low-pass filter with a transfer function $\mathcal{LF}(s) = 1 + \epsilon/s$ and the loop gain $G$ is assumed to adapt to the amplitude of the input such that $G|A_1^{(\rm r)}| = \textrm{constant}$.\footnote{Such a variable gain can possibly be implemented by using an automatic gain control circuit.} 
For convenience, we also ignore the VCO's internal noise \cite{Mehrotra2002, Petrovic2007}. Without loss of generality, let the output of the VCO (i.e. the recovered reference tone) be expressed as: 
\begin{eqnarray}
s_{\rm PLL}(t) = s_{\rm vco}(t) = \sqrt{2} \cos[ 2\pi f_{\rm c} t + \bar{\theta} + \theta(t)] \label{eqn_VCO_output}
\end{eqnarray}
where $\theta(t)$ may be arbitrary and we define $\bar{\theta} \in (-\pi, \pi]$ such that $A_1^{(\rm r)} e^{-{\rm j} \bar{\theta}} = - {\rm j} |A_1^{(\rm r)}|$. Then the stochastic differential equation governing \eqref{eqn_VCO_output} for $0 \leq t \leq D T_{\rm cs} - T_{\rm cp}$ is given by \cite{Viterbi_book}:
\begin{flalign} \label{eqn_PLL_diff_eqn_1}
& 2\pi f_{\rm c} \!+\! \frac{{\rm d} \theta(t)}{{\rm d} t} = \text{LF}\left\{ \mathrm{Re}\{\tilde{s}_{{\rm rx},1}(t)\} \sqrt{2} \cos \big[ 2\pi f_{\rm c} t + \bar{\theta} + \theta(t) \big] \right\} G & \nonumber \\
& \qquad \qquad \qquad \qquad + 2\pi f_{\rm vco} & \nonumber \\
& \quad = \mathrm{LF}\Big\{ \mathrm{Re} \big[A_1^{(\rm r)} e^{-{\rm j} [\bar{\theta} + \theta(t)]} + \tilde{w}^{(\rm r)}_1(t) e^{- {\rm j}[\bar{\theta} + \theta(t)]} \big] \Big\} G + 2\pi f_{\rm vco} \!\!\!\!\!\!\!\! & 
\end{flalign}
where $f_{\rm vco}$ is the free running frequency of the VCO with no input, we use \eqref{eqn_PLL_input} and assume $f_{\rm c}$ is much larger than the bandwidth of $\mathrm{LF}$. In this subsection, we are interested in finding the time required for locking ($D_1 T_{\rm cs}$), i.e., for $\theta(t)$ to (nearly) converge to a constant and characterizing the distribution of the PLL output $s_{\rm PLL}(t)$, or equivalently $\theta(t)$, during the last $D_2$ reference symbols when the PLL is locked to the reference tone. The first part is answered by the following remark:
\begin{remark} \label{Rem_PLL_parameters}
For the PLL considered, the phase lock acquisition time is $\approx \frac{1}{\epsilon} {\left( \frac{2 \pi (f_{\rm c} - f_{\rm vco})}{|A_1^{(\rm r)}| G} \right)}^2$ in the no noise scenario \cite{Gupta1975, Viterbi_book}. Thus $\epsilon$ and $|A_1^{(\rm r)}| G$ must be of the orders of $1/T_{\rm s}$ and $2 \pi |f_{\rm c} - f_{\rm vco}|$ respectively, to keep $D_1$ small. 
\end{remark}
Numerous techniques \cite{Schmitt1979, Ramayya1981} have been proposed to further reduce the lock acquisition time, which are not explored here for brevity. 
In the locked state, it can be shown that $\theta(t)$ suffers from random fluctuations due to the input noise $\tilde{w}^{(\rm r)}_1(t)$ in \eqref{eqn_PLL_diff_eqn_1}, and that $\theta(t)$ (${\rm modulo} \ 2 \pi$) is approximately a zero mean random process \cite{Gupta1975, Viterbi_book}. This fluctuation manifests as phase noise of $s_{\rm PLL}(t)$. 
While several attempts have been made to characterize the locked state $\theta(t)$ (see \cite{Viterbi_book, Gupta1975} and references therein), closed form results are available only for a few simple scenarios that are not applicable here. Therefore, for analytical tractability, 
we linearize \eqref{eqn_PLL_diff_eqn_1} using the following widely used approximations \cite{Viterbi_book}:
\begin{enumerate}
\item We neglect cycle slips and assume that the deviations of $\theta(t)$ about its mean value are small, such that $e^{- {\rm j}\theta(t)} \approx 1 - {\rm j} \theta(t)$ in the locked state.
\item We assume that the distribution of the base-band noise process $\tilde{w}^{(\rm r)}_1(t)$ is invariant to multiplication with $e^{-{\rm j}[\bar{\theta} + \theta(t)]}$, i.e., $\hat{w}^{(\rm r)}_{1}(t) \triangleq \tilde{w}^{(\rm r)}_{1}(t) e^{- {\rm j} [\bar{\theta} + \theta(t)]}$ is also a Gaussian noise process with power spectral density $\mathcal{S}_{\rm w}(f)$.
\end{enumerate}
Approximation 1 is accurate in the locked state and in the large SNR regime, while Approximation 2 is accurate when the noise bandwidth is much larger than the loop filter bandwidth \cite{Viterbi1963, Viterbi_book}. 
Using these approximations and the definition of $\bar{\theta}$, we can linearize \eqref{eqn_PLL_diff_eqn_1} as:
\begin{eqnarray}
\frac{{\rm d} \theta_{\rm L}(t)}{{\rm d} t} &=& \mathrm{LF}\Bigg\{ 
- {|A_1^{(\rm r)}|} \theta_{\rm L}(t) + \frac{\hat{w}^{(\rm r)}_{1}(t)+ {[\hat{w}^{(\rm r)}_{1}(t)]}^{*}}{2} \Bigg\} G \nonumber \\
&& \qquad - 2\pi [f_{\rm c}-f_{\rm vco}] \label{eqn_diff_eqn_reduction}
\end{eqnarray}
where we replace $\theta(t)$ by $\theta_{\rm L}(t)$ to denote use of the linear approximation. Note that for sufficient SNR, $\theta(t) \stackrel{\rm d}{\approx} \theta_{\rm L}(t)$ (${\rm modulo} \ 2 \pi$) during the last $D_2$ reference symbols. Assuming $\theta_{\rm L}(0)=0$ and the PLL input to be $0$ for $t \leq 0$ and taking the Laplace transform on both sides of \eqref{eqn_diff_eqn_reduction}, we obtain: 
\begin{eqnarray}
s \Theta_{\rm L}(s) &=& G \mathcal{LF}(s) \left[ - |A_1^{(\rm r)}| \Theta_{\rm L}(s) + \frac{\hat{\mathcal{W}}^{(\rm r)}_{1}(s) + {[\hat{\mathcal{W}}^{(\rm r)}_{1}(s^{*})]}^{*}}{2} \right] \nonumber \\
&& \qquad - \frac{2\pi [f_{\rm c}-f_{\rm vco}]}{s} \label{eqn_PLL_fourier} 
\end{eqnarray}
where $\Theta_{\rm L}(s)$ and $\hat{\mathcal{W}}^{(\rm r)}_{1}(s)$ are the Laplace transforms of $\theta_{\rm L}(t)$ and $\hat{w}^{(\rm r)}_{1}(t)$, respectively. It can be verified using the final value theorem that the contribution of the last term on the right hand side of \eqref{eqn_PLL_fourier} vanishes for $t \gg 0$ (i.e., in locked state). Therefore ignoring this term in \eqref{eqn_PLL_fourier}, we observe that $\theta_{\rm L}(t)$ is a zero mean, stationary Gaussian process \cite{Mehrotra2002}, in the locked state. Furthermore, the locked state power spectral density, auto-correlation function and variance of $\theta_{\rm L}(t)$ can then be computed, respectively, as: 
\begin{flalign}
& \mathcal{S}_{\theta_{\rm L}}(f) = \mathbb{E} {\left|\Theta_{\rm L}({\rm j}2\pi f) \right|}^2 &\nonumber \\
&\qquad \ \ = \frac{{|G|}^2 (4 \pi^2 f^2 + \epsilon^2) \mathcal{S}_{\rm w}(f) }{ 2{\big| -4 \pi^2 f^2 + G ({\rm j} 2 \pi f + \epsilon) |A_1^{(\rm r)}|\big|}^2} & \label{eqn_PN_spectrum}\\
& \mathcal{R}_{\theta_{\rm L}}(\tau) = \int_{-\infty}^{\infty} \mathcal{S}_{\theta_{\rm L}}(f) e^{{\rm j} 2 \pi f t} {\rm d}t & \nonumber \\
& \qquad \ \  \approx \frac{{|G|}^2 \mathrm{N}_0}{4} \Big[ \frac{a^2 - \epsilon^2}{a(a^2 - b^2)} e^{- a |t|} + \frac{b^2 - \epsilon^2}{b(b^2 - a^2)} e^{- b |t|} \Big] \!\!\!\!\!\!\!\! & \label{eqn_PN_autocorr} \\
& \mathrm{Var}\{\theta_{\rm L}(t)\} = \mathcal{R}_{\theta_{\rm L}}(0) \leq \mathrm{N}_0 \frac{|A_1^{(\rm r)}| G +  \epsilon}{4{|A_1^{(\rm r)}|}^2}, & \label{eqn_PN_var}
\end{flalign}
where $2a = G|A_1^{(\rm r)}| + \sqrt{G^2{|A_1^{(\rm r)}|}^2 - 4 G|A_1^{(\rm r)}| \epsilon}$, $2b = G|A_1^{(\rm r)}| - \sqrt{G^2{|A_1^{(\rm r)}|}^2 - 4 G|A_1^{(\rm r)}| \epsilon}$, \eqref{eqn_PN_autocorr}--\eqref{eqn_PN_var} follow from finding the inverse Fourier transform via partial fraction expansion and the final expressions follow by observing that $\mathcal{S}_{\rm w}(f) \leq \mathrm{N}_0$ for all $f$. Since $\theta_{\rm L}(t)$ is stationary and Gaussian in locked state, note that its distribution is completely characterized by \eqref{eqn_PN_spectrum}--\eqref{eqn_PN_var}. 

\subsection{Phase and amplitude offset estimation} \label{subsec_phase_amp_estimate}
This subsection analyzes the procedure for reference signal phase and amplitude offset estimation at each RX antenna. As illustrated in Fig.~\ref{Fig_block_diag}, the PLL signal from antenna $1$ is fed to a $\pi/2$ phase shifter to obtain its quadrature component. From \eqref{eqn_VCO_output}, the in-phase and quadrature-phase components of the PLL signal for $D_1 T_{\rm cs} - T_{\rm cp} \leq t \leq D T_{\rm cs} - T_{\rm cp}$ can be expressed together as:
\begin{eqnarray}
\tilde{s}_{\rm PLL}(t) = \sqrt{2} e^{{\rm j}[ 2\pi f_{\rm c} t + \bar{\theta} + \theta(t)]}.
\end{eqnarray}
At each RX antenna, the received reference signal is multiplied by the in-phase and quadrature-phase components of the PLL signal, and the resulting outputs are fed to `filter, sample and hold' circuits. This circuit involves a low pass filter with a bandwidth of $\approx 1/(D_2 T_{\rm cs})$, followed by a sample and hold circuit that samples the filtered output at the end of the $D$ reference symbols. 
For convenience, in this paper we shall approximate this `filter, sample and hold' by an integrate and hold operation as depicted in Fig.~\ref{Fig_block_diag}. 
Representing the `filter, sample and hold' outputs corresponding to the in-phase and quadrature-phase components of the PLL output as real and imaginary respectively, the  $M_{\rm rx} \times 1$ complex sample and hold vector can be approximated as:
\begin{flalign}
& \mathbf{I}_{\rm PACE} \approx \frac{1}{D_2} \int_{T_1}^{T_2} \!\!\! \mathrm{Re}\{\tilde{\mathbf{s}}^{(\rm r)}_{\rm rx}(t)\} \tilde{s}^{*}_{\rm PLL}(t) {\rm d}t & \nonumber \\
& \qquad = \frac{1}{D_2} \int_{T_1}^{T_2} \!\! \left[ \sqrt{\frac{1}{T_{\rm cs}}} \boldsymbol{\hat{\mathcal{H}}}(0) \mathbf{t} \sqrt{E^{(\rm r)}} e^{- j [\bar{\theta} + \theta(t)]} \!+\! \hat{\mathbf{w}}^{(\rm r)}(t) \right] {\rm d}t, \!\!\!\!\!\!\!\! & \label{eqn_integral_phase_1}
\end{flalign}
where $\frac{1}{D_2}$ is a scaling factor, $T_1 \triangleq D_1 T_{\rm cs} - T_{\rm cp}$, $T_2 \triangleq D T_{\rm cs} - T_{\rm cp}$, $\boldsymbol{\hat{\mathcal{H}}}(f_k) \triangleq \sum_{\ell=0}^{L-1} \alpha_{\ell} \mathbf{a}_{\rm rx}(\ell) {\mathbf{a}_{\rm tx}(\ell) }^{\dag} e^{- {\rm j} 2 \pi (f_{\rm c}+f_k) \hat{\tau}_{\ell}}$ is the $M_{\rm rx} \times M_{\rm tx}$ frequency-domain channel matrix for the $k$-th  subcarrier during beamformer design phase and $\hat{\mathbf{w}}^{(\rm r)}(t) \triangleq \tilde{\mathbf{w}}^{(\rm r)}(t) e^{- {\rm j} [\bar{\theta} + \theta(t)]}$ is an $M_{\rm rx} \times 1$ i.i.d. Gaussian noise process vector with power spectral density $\mathcal{S}_{\rm w}(f)$ (see Approximation 2). 
Note that in locked state ($T_1 \leq t \leq T_2$), we have $\theta(t) \stackrel{\rm d}{\approx} \theta_{\rm L}(t)$ (${\rm modulo} \ 2 \pi$), as per approximations 1 and 2. Furthermore from \eqref{eqn_PN_autocorr}, the auto-correlation function of $\theta_{\rm L}(t)$ decays exponentially with a time constant of ${\rm O}(1/G |A_1^{(\rm r)}|)$. Therefore, for $G |A_1^{(\rm r)}| \gg 1/ (D_2 T_{\rm cs})$, $\mathbf{I}_{\rm PACE}$ experiences enough independent realizations of $\theta(t)$. Therefore replacing the integral in \eqref{eqn_integral_phase_1} with an expectation over $\mathrm{VCO}$ phase noise, we have:
\begin{eqnarray} \label{eqn_integral_phase_2}
\mathbf{I}_{\rm PACE} & \stackrel{(1)}{\approx} & \sqrt{T_{\rm cs}} \boldsymbol{\hat{\mathcal{H}}}(0) \mathbf{t} \sqrt{E^{(\rm r)}} e^{- {\rm j} \bar{\theta}} \mathbb{E}\{e^{-{\rm j} \theta_{\rm L}(t)}\} + \int_{T_1}^{T_2} \frac{\hat{\mathbf{w}}^{(\rm r)}(t)}{D_2} {\rm d}t \nonumber \\
& \stackrel{(2)}{=}& \sqrt{T_{\rm cs}} \boldsymbol{\hat{\mathcal{H}}}(0) \mathbf{t} \sqrt{E^{(\rm r)}} e^{- {\rm j} \bar{\theta}} e^{-\frac{\mathrm{Var}\{\theta_{\rm L}(t)\}}{2} } + \sqrt{T_{\rm cs}} \hat{\mathbf{W}}^{(\rm r)},
\end{eqnarray}
where $\stackrel{(1)}{\approx}$ follows from the fact that $\theta(t) \stackrel{\rm d}{\approx} \theta_{\rm L}(t)$ (${\rm modulo} \ 2 \pi$) in locked state, $\stackrel{(2)}{=}$ follows by defining $\hat{\mathbf{W}}^{(\rm r)} \triangleq \frac{1}{D_2\sqrt{T_{\rm cs}}} \int_{T_1}^{T_2} \hat{\mathbf{w}}^{(\rm r)}(t) {\rm d}t$ and by using the characteristic function for the stationary Gaussian process $\theta_{\rm L}(t)$. Since $\hat{\mathbf{w}}^{(\rm r)}(t)$ is i.i.d. Gaussian with a power spectral density $\mathcal{S}_{\rm w}(f)$, it can be verified that $\hat{\mathbf{W}}^{(\rm r)} \sim \mathcal{CN}[\mathbb{O}_{M_{\rm rx}\times 1}, (\mathrm{N}_{0}/D_2) \mathbb{I}_{M_{\rm rx}}]$ when $\frac{1}{D_2} \ll K_1,K_2$. From \eqref{eqn_integral_phase_2}, note that the signal component of the sample and hold output $\mathbf{I}_{\rm PACE}$ is directly proportional to the channel matrix at the reference frequency. The outputs are used as a control signals to the RX phase-shifter array, to generate the RX analog beam to be used during the data transmission phase. From \eqref{eqn_integral_phase_2} and \eqref{eqn_PN_var}, note that either $D_2$ or $|A_1^{(\rm r)}|$ can be increased, to reduce the impact of noise $\hat{\mathbf{W}}^{(\rm r)}$ on the analog beam. Since $|A_1^{(\rm r)}|$ is a non-decreasing function of $E^{(\rm r)}$ (see \eqref{eqn_PLL_input}), this implies that $E^{(\rm r)}$ should be kept as large as possible while satisfying $E^{(\rm r)} \leq E_{\rm cs}$ and meeting the spectral mask regulations. 

Note that the results in this section are based on several approximations, including the linear phase noise analysis in Section \ref{subsec_pll_analysis}. To test the accuracy of these results, the numerical values of $\big| \int_{T_1}^{T_2} e^{-{\rm j} \theta(t)}  {\rm d}t \big| \big/ {D_2 T_{\rm cs}}$, obtained by simulating realizations of $\theta(t)$ from \eqref{eqn_PLL_diff_eqn_1}, are compared to its analytic approximation $ e^{- \frac{\mathrm{Var}\{\theta_{\rm L}(t)\}}{2}}$ in Fig.~\ref{Fig_aCE_approx_verify}. Note that this comparison reflects the accuracy of the approximation in \eqref{eqn_integral_phase_2}. As is evident from Fig.~\ref{Fig_aCE_approx_verify}, \eqref{eqn_integral_phase_2} is accurate above a certain SNR. Additionally, since $\mathbf{I}_{\rm PACE}$ decays exponentially with $\mathrm{Var}\{\theta_{\rm L}(t)\}$ (see \eqref{eqn_integral_phase_2}), we observe from Fig.~\ref{Fig_aCE_approx_verify_mean} that the mean integrator output drops drastically below a certain threshold SNR. As shall be shown in Section \ref{sec_data_transmission}, such a drop in the mean causes a sharp degradation in the system performance below this threshold SNR. 
Therefore in the next subsection we propose a better reference recovery circuit, called weighted carrier arraying, that reduces the SNR threshold. 
\begin{figure}[h] 
\centering 
\subfloat[$1-$Mean: $1- \mathbb{E}\big| \int_{T_1}^{T_2} \frac{e^{-{\rm j} \theta(t)}}{D_2 T_{\rm cs}} {\rm d}t \big|$]{\includegraphics[width= 0.45\textwidth]{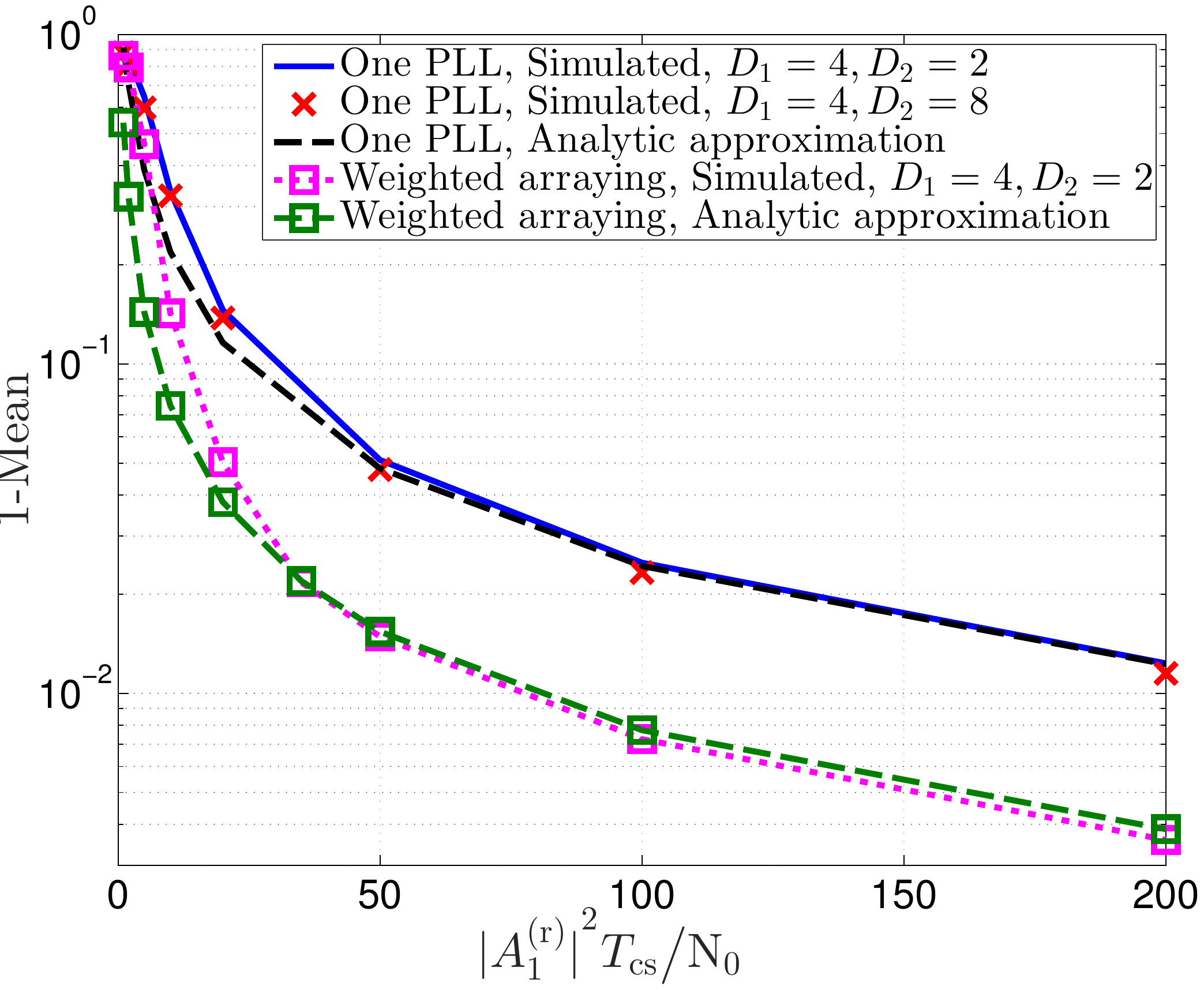} \label{Fig_aCE_approx_verify_mean}} \hspace{0.5cm}
\subfloat[Variance: $\mathbb{E} {\Big[ \Big| \int_{T_1}^{T_2} \!\! \frac{e^{-{\rm j} \theta(t)} }{D_2 T_{\rm cs}} {\rm d}t \Big| - e^{-\frac{\mathrm{Var}\{\theta_{\rm L}\}}{2}} \Big]}^2$]{\includegraphics[width= 0.45\textwidth]{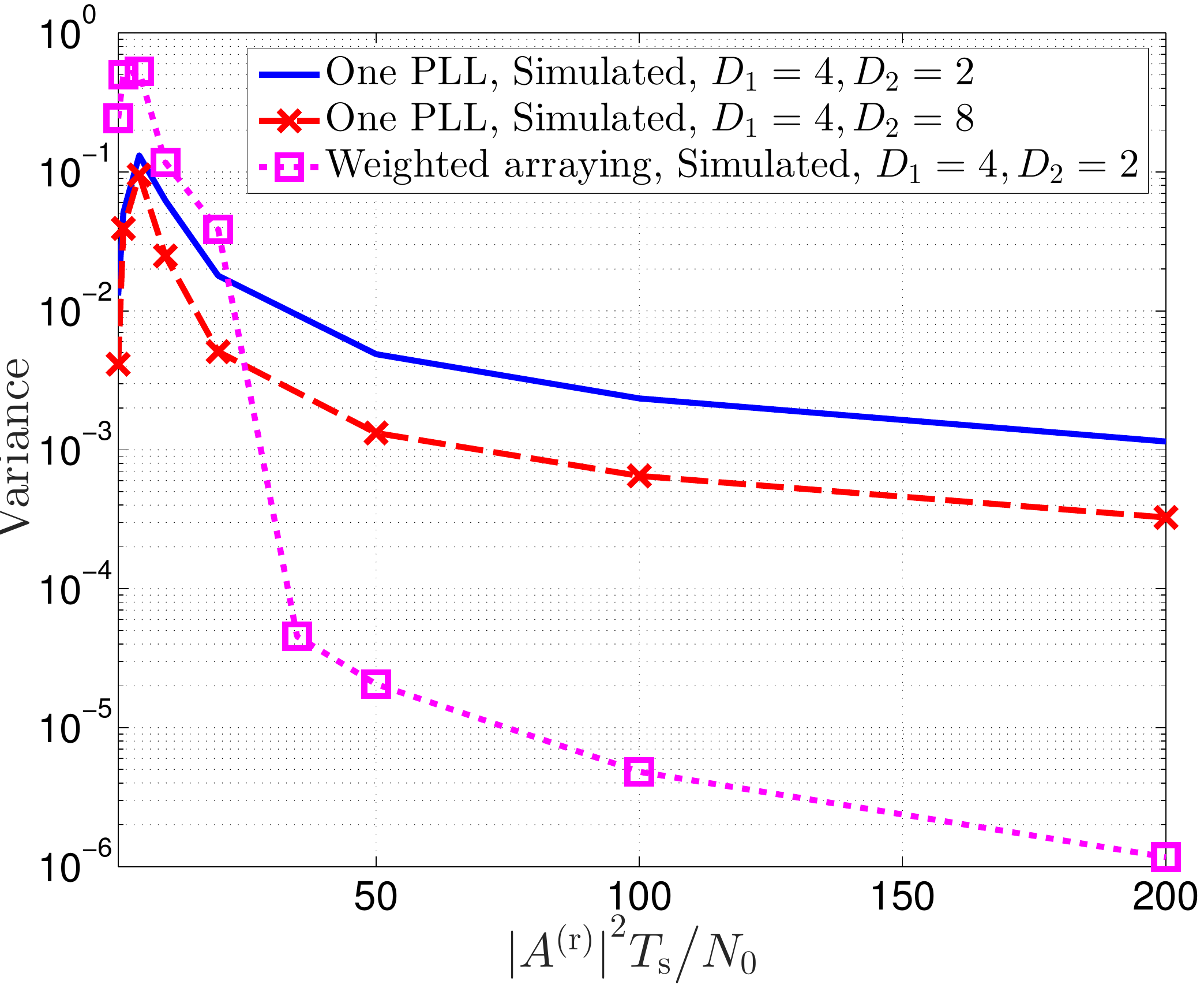} \label{Fig_aCE_approx_verify_var}}
\caption{Accuracy of the analytical approximation for the filter, sample and hold outputs in \eqref{eqn_integral_phase_2} versus SNR for the one PLL and weighted arraying architectures. In Fig.~\ref{Fig_aCE_approx_verify_mean}, we plot $1- \mathbb{E}\big| \int_{T_1}^{T_2} \frac{e^{-{\rm j} \theta(t)}}{D_2 T_{\rm cs}} {\rm d}t \big|$ for simulations and $1- e^{-\frac{\mathrm{Var}\{\theta_{\rm L}\}}{2}}$ for analytic approximation. We assume $A^{(\rm r)}_1 = 1$, $A^{(\rm r)}_5 = 0.7e^{{\rm j} \pi/3}$, $A^{(\rm r)}_{15} = 0.5 e^{-{\rm j} \pi/3}$ and the remaining parameters are from Table \ref{Table_sim_param}.} 
\label{Fig_aCE_approx_verify} 
\end{figure} 
\begin{table}[ht]
\caption{One PLL and weighted arraying simulation parameters} 
\centering  
\begin{tabular}{| l | c | l | c |} 
\hline
Parameter & Value & Parameter & Value \\
\hline
$f_{\rm c}$ & $30$GHz & $\epsilon$ & $4/T_{\rm s}$ \\
\hline
$f_{\rm c} - f_{\rm vco}$ & $5$MHz & $f_{\rm IF}$ & $1$GHz \\
\hline
$T_{\rm s}$ & $1 \mu s$ & $f_{\rm c} \!-\! f_{\rm IF} \!-\! f^{\rm p}_{\rm vco}$ & $5$MHz \\
\hline
$T_{\rm cp}$ & $0.1 \mu$s & $\mathcal{M}$ & $\{1,5,15\}$ \\
\hline
$K_1$ & $512$ & $\mu$ & $2\pi/T_{\rm s}$ \\
\hline
$K_2$ & $511$ & $G^{\rm p} {|A_{\rm rss}^{(\rm r)}|}^2 \big/ \mu $ & $\pi |f_{\rm c} \!-\! f_{\rm IF} \!-\! f^{\rm p}_{\rm vco}|$ \\
\hline
$G |A_1^{(\rm r)}|$ & $\pi|f_{\rm c} \!-\! f_{\rm vco}|$ & $\epsilon^{\rm p}$ & $4/T_{\rm s}$ \\
\hline
\end{tabular}
\label{Table_sim_param} 
\end{table}
%

\begin{remark} \label{Rem_MPC_delay_changes}
The preceding derivations assumed that the MPC delays are identical for the $D$ reference symbols. However since the PLL continuously tracks the RX signal and phase/amplitude estimation at each antenna is performed simultaneously, these results are valid even if the delays change slowly within the beamformer design phase.
\end{remark}
\begin{remark} \label{Rem_parallel_data_tx}
The RX phase-shifter array or the down-conversion chain are not utilized during the $D$ reference symbols of the beamformer design phase. Therefore, data reception is also possible during these $D$ reference symbols in parallel, as long as a sufficient guard band between the data sub-carriers and the reference sub-carrier is provided (similar to \eqref{eqn_tx_signal_pilot}) to reduce impact on the PLL performance. 
\end{remark}
Note that in a multi-cell scenario, use of the same reference tone in adjacent cells can cause reference tone contamination, i.e., $\mathbf{I}_{\rm PACE}$ may contain components corresponding to the channel from a neighboring BS. This is analogous to pilot contamination in conventional CE approaches \cite{Marzetta}, and can be avoided by using different, well-separated reference frequencies in adjacent cells. 

\subsection{Recovery of the reference tone - using weighted carrier arraying} \label{subsec_carrier_arraying}
For reducing the PLL SNR threshold and improving performance, in this subsection we propose a new reference recovery technique called weighted carrier arraying, as illustrated in Fig.~\ref{Fig_block_diag_arraying}.
Apart from a main primarly PLL, weighted carrier arraying has secondary PLLs at a subset $\mathcal{M}$ of antennas, which compensate for the inter-antenna phase shift. The resulting phase compensated signals from the $\mathcal{M}$ antennas are weighted, combined and tracked by the primary PLL, which operates at a higher SNR and with a wider loop bandwidth than the secondary PLLs. Note that this architecture can be interpreted as a generalization of the carrier recovery process in \cite{Schrader1962, Schrader1964, Thompson1976, Golliday1982} that allows weighted combining. We shall next analyze the performance of this arrayed PLL in the locked state. However, an analysis of the transient behavior and lock acquisition time of this design is beyond the scope of this paper.
\begin{figure*}[h] 
\setlength{\unitlength}{0.08in} 
\centering 
\begin{tikzpicture}[x=0.08in,y=0.08in] 
\draw[pattern=crosshatch dots, pattern color=gray] (21.5,0) rectangle (34.5,17); \draw(28,8.5) node{\small Secondary PLLs};
\draw (12.5,2) -- (12.5,19);
\filldraw[color=black, fill=black](0,7) (0,7) -- (1,8) -- (-1,8) -- cycle;
\draw (0,7) -- (0,5) -- (1,5); \draw(1,4) rectangle (5,6) node[pos=.5] {\small LNA}; \draw[->] (5,5) -- (14,5); \draw(8,6) node{$s_{{\rm rx},m}(t)$}; \draw (15,5) circle(1); \draw(15,5) node{$\times$}; \filldraw[fill=black] (12.5,2) circle(0.3); \draw[->] (12.5,2) -- (15,2) -- (15,4); \draw(16,5) -- (17,5); \draw(17,4) rectangle (21,6) node[pos=.5] {\small ${\rm LPF}$}; \draw[->] (21,5)--(22,5);
\draw[fill=white] (23,5) circle(1); \draw(23,5) node{$\times$}; \draw(24,5) -- (25,5); \draw[fill=white] (25,4) rectangle (29,6) node[pos=.5] {\small ${\rm LPF}$}; \draw[->] (29,5) -- (34,5) -- (34,2) -- (32,2); \draw[->] (30.5,0.5) -- (33.5,4); 
\draw[fill=white] (33,0.5) -- (33,3.5) -- (30,2) -- cycle; \draw(31.9,2) node{\small $G^{\rm s}_m$}; \draw[->] (30,2)--(29,2);
\draw[fill=white] (24,1) rectangle (29,3) node[pos=.5] {\small ${\rm VCO}_{\rm s}$}; \draw[->] (24,2) -- (23,2) -- (23,4); 
\draw (34,5) -- (35.5,5); \draw[->] (35.5,3) -- (38.5,6.5); \draw[fill=white] (35.5,3.5) -- (35.5,6.5) -- (38.5,5) -- cycle; \draw(38.5,3) node{\small $1/G^{\rm s}_m$}; \draw[->] (38.5,5) -- (40,5) -- (40,9);

\filldraw[color=black, fill=black](0,17) (0,17) -- (1,18) -- (-1,18) -- cycle;
\draw (0,17) -- (0,15) -- (1,15); \draw(1,14) rectangle (5,16) node[pos=.5] {\small LNA}; \draw[->] (5,15) -- (14,15); \draw(8,16) node{$s_{{\rm rx},1}(t)$}; \draw (15,15) circle(1); \draw(15,15) node{$\times$}; \filldraw[fill=black] (12.5,12) circle(0.3); \draw[->] (12.5,12) -- (15,12) -- (15,14); \draw(16,15) -- (17,15); \draw(17,14) rectangle (21,16) node[pos=.5] {\small ${\rm LPF}$}; \draw[->] (21,15)--(22,15);
\draw[fill=white] (23,15) circle(1); \draw(23,15) node{$\times$}; \draw(24,15) -- (25,15); \draw[fill=white] (25,14) rectangle (29,16) node[pos=.5] {\small ${\rm LPF}$}; \draw[->] (29,15) -- (34,15) -- (34,12) -- (32,12); \draw[->] (30.5,10.5) -- (33.5,14); 
\draw[fill=white] (33,10.5) -- (33,13.5) -- (30,12) -- cycle; \draw(31.9,12) node{\small $G^{\rm s}_1$}; \draw[->] (30,12)--(29,12);
\draw[fill=white] (24,11) rectangle (29,13) node[pos=.5] {\small ${\rm VCO}_{\rm s}$}; \draw[->] (24,12) -- (23,12) -- (23,14); 
\draw (34,15) -- (35.5,15); \draw[->] (35.5,13) -- (38,16.5); \draw[fill=white] (35.5,13.5) -- (35.5,16.5) -- (38.5,15) -- cycle; \draw(38.5,17.5) node{\small $1/G^{\rm s}_1$}; \draw[->] (38.5,15) -- (40,15) -- (40,11);

\draw (40,10) circle(1); \draw(40,10) node{$+$}; \draw[->] (41,10) -- (42,10); \draw(42,9) rectangle (45,11) node[pos=.5] {\small ${\rm LF}_{\rm p}$}; \draw (45,10) -- (46,10); \draw[->] (46,8) -- (49,12); \draw[fill=white] (46,8.5) -- (46,11.5) -- (49,10) -- cycle; \draw(46.9,10) node{\small ${G}^{\rm p}$}; \draw[->] (49,10) -- (50,10) -- (50,12.5); \draw (50,14.5) -- (50,19) -- (12.5,19);
\draw(47,12.5) rectangle (53,14.5) node[pos=.5] {\small ${\rm VCO}_{\rm p}$}; 
\draw[->] (50,19) -- (51,19); \draw (52,19) circle(1); \draw(52,19) node{$\times$}; \draw[->] (52,17) -- (52,18); \draw(56,16) node{\small $\cos(2\pi f_{\rm IF}t)$};
\draw (53,19) -- (54,19); \draw(54,18) rectangle (58,20) node[pos=.5] {\small ${\rm BPF}$}; \draw[->] (58,19) -- (60,19); \draw(63,19) node{$s_{\rm PLL}(t)$}; \draw[dashed] (11.5,-1) rectangle (59,21); \draw(53,0) node{Arrayed PLL};
\end{tikzpicture}
\caption{Block diagram of weighted carrier arraying for reference tone recovery.} 
\label{Fig_block_diag_arraying} 
\end{figure*}
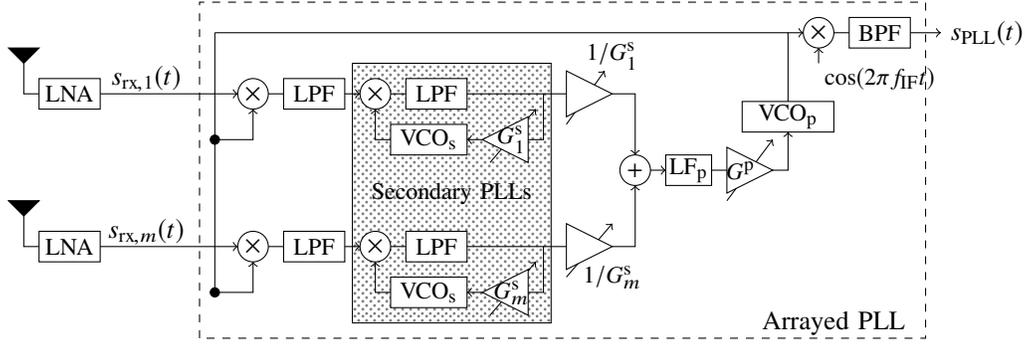 
In Fig.~\ref{Fig_block_diag_arraying}, $\mathrm{LPF}/\mathrm{BPF}$ refer to low-pass and band-pass filters with wide bandwidths, designed only to remove the unwanted side-band of the mixer outputs. Without loss of generality, we express the outputs of the primary and secondary VCOs as:\footnote{Another convergence point for $s^{\rm p}_{\rm vco}(t)$ is at a frequency of $(f_{\rm c}+f_{\rm IF})$. But the final results presented here are also valid for this alternate convergence point.}
\begin{eqnarray}
s^{\rm p}_{\rm vco}(t) &=& \sqrt{2} \cos[ 2\pi (f_{\rm c}-f_{\rm IF}) t + \theta(t)] \nonumber \\
s^{\rm s}_{\rm vco,m}(t) &=& \sqrt{2} \cos[ 2\pi f_{\rm IF} t + \bar{\phi}_m + \phi_m(t)], \ \ m \in \mathcal{M} \nonumber
\end{eqnarray}
respectively, where $\theta(t),\phi_m(t)$ are arbitrary, $f_{\rm IF}$ is the common free running frequency of the secondary VCOs, and $\bar{\phi}_m$ are such that $A_m^{(\rm r)} e^{-{\rm j} [\bar{\phi}_m]} = -{\rm j} |A_m^{(\rm r)}|$ for all $m \in \mathcal{M}$. Now similar to Section \ref{subsec_pll_analysis}, from \eqref{eqn_PLL_input} the differential equation governing the secondary PLL at antenna $m \in \mathcal{M}$ can be expressed as:
\begin{eqnarray}
\frac{{\rm d} \phi_m(t)}{{\rm d} t} &=& \mathrm{Re} \big[A_m^{(\rm r)} e^{-{\rm j} [\bar{\phi}_m + \phi_m(t) + \theta(t)]} \nonumber \\
&& \qquad + \tilde{w}_m^{(\rm r)}(t) e^{- {\rm j}[\bar{\phi}_m + \phi_m(t) + \theta(t)]} \big] \frac{G^{\rm s}_m}{\sqrt{2}} \nonumber \\
&=& \mathrm{Re} \big[- {\rm j}|A_m^{(\rm r)}| e^{-{\rm j} [\phi_m(t) + \theta(t)]} + \hat{w}_m^{(\rm r)}(t) \big] \frac{{G}^{\rm s}_m}{\sqrt{2}}  \label{eqn_carr_array_sec}
\end{eqnarray}
where we define $\hat{w}_m^{(\rm r)}(t) \triangleq \tilde{w}_m^{(\rm r)}(t) e^{- {\rm j}[\bar{\phi}_m + \phi_m(t) + \theta(t)]}$ and $G^{\rm s}_m$ is the loop gain of the secondary VCO at antenna $m$. Similarly, for the primary VCO we have:
\begin{eqnarray}
2\pi (f_{\rm c}-f_{\rm IF}) + \frac{{\rm d} \theta(t)}{{\rm d} t} &=& \mathrm{LF} \Big\{ \sum_{m \in \mathcal{M}} \mathrm{Re} \big[- {\rm j}|A_m^{(\rm r)}| e^{-{\rm j} [\phi_m(t) + \theta(t)]} \nonumber \\
&& \quad + \hat{w}_m^{(\rm r)}(t) \big] \frac{1}{G_m^{\rm s}} \Big\} \frac{{G}^{\rm p}}{\sqrt{2}} + 2\pi f^{\rm p}_{\rm vco} \label{eqn_carr_array_prim}
\end{eqnarray}
where $f^{\rm p}_{\rm vco}$ is the free running frequency of the primary VCO, $G^{\rm p}$ is the loop gain and $\mathrm{LF}_{\rm p}$ is an active low pass filter with transfer function $\mathrm{LF}_{\rm p}(s) = (1+\epsilon^{\rm p}/s)$. Similar to Section \ref{subsec_pll_analysis}, to obtain the locked state distribution of $\theta(t)$ we shall rely on the linear PLL analysis by using: 1) $e^{-{\rm j} [\phi_m(t) + \theta(t)]} \approx 1 -{\rm j} [\phi_m(t) + \theta(t)]$, which is accurate in the high SNR locked state where $\phi_m(t) + \theta(t) \ll 1$ and 2) $\hat{w}_m^{(\rm r)}(t) \stackrel{\rm d}{\approx} \tilde{w}_m^{(\rm r)}(t)$, which is accurate for a wide noise bandwidth. Using these approximations in \eqref{eqn_carr_array_sec}--\eqref{eqn_carr_array_prim} with zero initial conditions and taking Laplace transforms, we obtain:
\begin{subequations} \label{eqn_array_laplace}
\begin{flalign}
&s\Phi^{\rm L}_m(s)  = \Big( -|A_m^{(\rm r)}| [\Phi^{\rm L}_m(s) + \Theta_{\rm L}(s)] & \nonumber \\
& \qquad \qquad \qquad \qquad + \frac{\hat{\mathcal{W}}_m^{(\rm r)}(s) + {[\hat{\mathcal{W}}_m^{(\rm r)}(s^{*})]}^{*}}{2} \Big) \frac{G^{\rm s}_m}{\sqrt{2}} & \\
& s\Theta_{\rm L}(s) = \mathrm{LF}(s) \!\! \sum_{m \in \mathcal{M}} \!\! \Big[ \!\! -\! \frac{|A_m^{(\rm r)}|}{G_{m}^{\rm s}} [\Phi^{\rm L}_m(s) \!+\! \Theta_{\rm L}(s)] & \nonumber \\
& \qquad \quad \!+\! \frac{\hat{\mathcal{W}}_m^{(\rm r)}(s) \!+\! {[\hat{\mathcal{W}}_m^{(\rm r)}(s^{*})]}^{*}}{2 G_{m}^{\rm s}} \Big] \frac{{G}^{\rm p}}{\sqrt{2}} \!+\! \frac{2\pi (f_{\rm IF}\!+\! f^{\rm p}_{\rm vco}\!-\!f_{\rm c})}{s} \!\!\!\!\!\!\!\!\!\! & 
\end{flalign}
\end{subequations}
where $\hat{\mathcal{W}}^{(\rm r)}_{m}(s)$, $\Theta_{\rm L}(s)$ and $\Phi^{\rm L}_m(s)$ are the Laplace transforms of $\hat{w}^{(\rm r)}_{m}(t)$, linear approximation $\theta_{\rm L}(t)$ and linear approximation $\phi^{_{\rm L}}_m(t)$, respectively. We assume that the loop gains of the PLLs adapt to the amplitudes of the input such that $|A_m^{(\rm r)}| G^{\rm s}_m = \mu \ \forall m \in \mathcal{M}$ and $\sum_{m \in \mathcal{M}} G^{\rm p} {|A_m^{(\rm r)}|}^2 = \text{constant}$. Then solving the system of equations in \eqref{eqn_array_laplace}, we obtain:
\begin{flalign} \label{eqn_laplace_arraying_final}
&\bigg[s + \sum_{m \in \mathcal{M}} \frac{(s + \epsilon^{\rm p}){|A_m^{(\rm r)}|}^2 G^{\rm p}}{\mu(\sqrt{2} s + \mu)} \bigg] \Theta_{\rm L}(s) & \nonumber \\
& \qquad \qquad = \sum_{m \in \mathcal{M}} \frac{(s + \epsilon^{\rm p}) (\hat{\mathcal{W}}_m^{(\rm r)}(s) + {[\hat{\mathcal{W}}_m^{(\rm r)}(s^{*})]}^{*})|A_m^{(\rm r)}| {G}^{\rm p}}{2 \mu (\sqrt{2} s + \mu)} & \nonumber \\
& \qquad \qquad \qquad \qquad + \frac{2\pi (f_{\rm IF} + f^{\rm p}_{\rm vco} - f_{\rm c})}{s} &
\end{flalign}
It can be verified using the final value theorem that the last term in \eqref{eqn_laplace_arraying_final} only contributes a constant phase shift for $t \gg 0$ (in locked state), say $\bar{\theta}_{\rm L}$.\footnote{Simulations suggest this constant phase shift for the actual non-linear system \eqref{eqn_carr_array_sec}--\eqref{eqn_carr_array_prim} is noise dependent. However such an arbitrary, but constant, phase shift does not impact the resulting beamforming gain if cycle skipping probability is low.}
Thus, using steps similar to Section \ref{subsec_pll_analysis}, we can obtain the locked state power spectral density and variance of the time varying part of $\theta_{\rm L}(t)$, i.e., $\theta_{\rm L}(t)-\bar{\theta}_{\rm L}$, as:
\begin{flalign} 
&\mathcal{S}_{\theta_{\rm L}-\bar{\theta}_{\rm L}}(f) & \nonumber \\
& \qquad = \frac{\mathrm{N}_0 {|A^{(\rm r)}_{\rm rss}{G}^{\rm p} |}^2}{2} { \left| \frac{ (s + \epsilon^{\rm p}) }{ s \mu(\sqrt{2} s + \mu) + (s + \epsilon^{\rm p}) {|A^{(\rm r)}_{\rm rss}|}^2 G^{\rm p}} \right|}_{s = {\rm j 2 \pi f}}^2 \!\!\!\!\!\!\!\!\!\!\!\!\!\!\! & \label{eqn_PN_spectrum_array} \\
& \mathrm{Var}\{\theta_{\rm L}(t)\} = \frac{({|A^{(\rm r)}_{\rm rss}|}^2 [G^{\rm p}/\mu ]+ \sqrt{2} \epsilon^{\rm p}) [{G}^{\rm p}/\mu] \mathrm{N}_0}{4 \sqrt{2}(\mu + {|A^{(\rm r)}_{\rm rss}|}^2 [G^{\rm p}/\mu])} & \nonumber \\
& \qquad \qquad \ \leq \frac{({|A^{(\rm r)}_{\rm rss}|}^2 [G^{\rm p}/\sqrt{2} \mu ]+ \epsilon^{\rm p}) \mathrm{N}_0}{4 {|A^{(\rm r)}_{\rm rss}|}^2}, & \label{eqn_PN_var_array}
\end{flalign}
where ${[A^{(\rm r)}_{\rm rss}]}^2 = \sum_{m \in \mathcal{M}} {|A_m^{(\rm r)}|}^2$. 
Comparing \eqref{eqn_PN_var_array} to \eqref{eqn_PN_var}, note that the PLL phase noise is essentially reduced by the maximal ratio combining gain corresponding to the $\mathcal{M}$ antennas. As this variation in $\theta_{\rm L}(t)$ manifests as phase noise of $s_{\rm PLL}(t)$ in Fig.~\ref{Fig_block_diag_arraying}, the `filter, sample and hold' outputs with weighted carrier arraying can be obtained by using \eqref{eqn_PN_var_array} in \eqref{eqn_integral_phase_2}. The accuracy of the resulting approximation is studied via simulations in Fig.~\ref{Fig_aCE_approx_verify}. 

\section{Data transmission} \label{sec_data_transmission}
During the data transmission phase, OFDM symbols of type \eqref{eqn_tx_signal_data} are transmitted and the corresponding received signals are processed via the phase-shifter array with $\mathbf{I}_{\rm PACE}$ as the control signals. Without loss of generality, again assuming the $0$-th OFDM symbol as a representative data symbol, the combined data signal at the RX for $0 \leq t \leq T_{\rm s}$ can be expressed as:\footref{note2}
\begin{flalign}
& R(t) = \frac{1}{\sqrt{2}} \mathbf{I}_{\rm PACE}^{\dag} \bigg[ \sum_{\ell=0}^{L-1} \sum_{k \in \mathcal{K}} \sqrt{\frac{2}{T_{\rm cs}}} \alpha_{\ell} \mathbf{a}_{\rm rx}(\ell) {\mathbf{a}_{\rm tx}(\ell) }^{\dag} \mathbf{t} x_k e^{{\rm j} 2 \pi (f_{\rm c}+f_k) (t-\tau_{\ell})} & \nonumber \\
& \qquad \qquad \qquad + \sqrt{2}\tilde{\mathbf{w}}^{(\rm d)}(t) e^{{\rm j} 2 \pi f_{\rm c} t} \bigg] & \nonumber 
\end{flalign}
where the $1/\sqrt{2}$ is a scaling constant for convenience and we assume that the MPC delays for this representative data symbol are $\{\tau_0,...,\tau_{L-1}\}$. 
This phase shifted and combined signal $R(t)$ is then converted to base-band by a separate RX oscillator, and any resulting phase noise is assumed to be mitigated via some digital phase noise compensation techniques \cite{Robertson1995, Wu2006, Petrovic2007, Randel2010}. 
Therefore neglecting the down-conversion phase noise, the resulting base-band signal can be expressed as $R_{\rm BB}(t) = R(t) e^{-{\rm j} 2 \pi f_{\rm c} t}$. 
This signal is then sampled and OFDM demodulation follows. The OFDM demodulation output for the $k$-th subcarrier ($k \in \mathcal{K}$) is then given by:
\begin{eqnarray}
Y_{k} &=& \frac{1}{K}\sum_{u = 0}^{K-1} R_{\rm BB} \Big(\frac{u T_{\rm s}}{K}\Big) e^{- {\rm j} \frac{2 \pi k u}{K}} \nonumber \\
&=& \frac{1}{\sqrt{T_{\rm cs}}} \mathbf{I}_{\rm PACE}^{\dag} \boldsymbol{\mathcal{H}}(f_k) \mathbf{t}  x_k + \frac{1}{\sqrt{T_{\rm cs}}}\mathbf{I}_{\rm PACE}^{\dag} \tilde{\mathbf{W}}^{(\rm d)}[k] \label{eqn_exprn_Y_k}
\end{eqnarray}
where $\boldsymbol{\mathcal{H}}(f_k) \triangleq \sum_{\ell=0}^{L-1} \alpha_{\ell} \mathbf{a}_{\rm rx}(\ell) {\mathbf{a}_{\rm tx}(\ell) }^{\dag} e^{- {\rm j} 2 \pi (f_{\rm c}+f_k) \tau_{\ell}}$ is the $M_{\rm rx} \times M_{\rm tx}$ frequency domain channel matrix for the $k$-th data subcarrier and $\tilde{\mathbf{W}}^{(\rm d)}[k] \triangleq \frac{\sqrt{T_{\rm cs}}}{K} \sum_{u=0}^{K-1} \tilde{\mathbf{w}}^{(\rm d)}(\frac{u T_{\rm s}}{K}) e^{- {\rm j} \frac{2 \pi k u}{K}}$, with $\tilde{\mathbf{W}}^{(\rm d)}[k]$ being independently distributed for each $k \in \mathcal{K}$ as $\tilde{\mathbf{W}}^{(\rm d)}[k] \sim \mathcal{CN}[\mathbb{O}_{M_{\rm rx} \times 1}, (\mathrm{N}_0 T_{\rm cs}/T_{\rm s}) \mathbb{I}_{M_{\rm rx}} ]$. 
Note from \eqref{eqn_integral_phase_2} that $\mathbf{I}^{\dag}_{\rm PACE}$ is similar (with appropriate scaling), but not identical, to the MRC beamformer for the $k$-th sub-carrier: $\mathbf{t}^{\dag} \boldsymbol{\mathcal{H}}(f_k)^{\dag}$. The mismatch is due to the beamforming noise $\hat{\mathbf{W}}^{(\rm r)}$ and because the reference symbols and the $k$-th sub-carrier data stream pass through slightly different channels, owing to the difference in sub-carrier frequencies and the MPC delays ($\hat{\tau}_{\ell} \neq \tau_{\ell}$). Consequently, the beamformer $\mathbf{I}_{\rm PACE}$ only achieves imperfect MRC, leading to some loss in performance and causing the effective channel coefficients $\mathbf{I}_{\rm PACE}^{\dag} \boldsymbol{\mathcal{H}}(f_k) \mathbf{t}$ to vary with the sub-carrier index $k$, i.e., the system experiences frequency-selective fading. Furthermore, since the MPC delays $\{\tau_{0},..,\tau_{L-1}\}$ change after every iCSI coherence time, so may these channel coefficients. 
As depicted in Fig.~\ref{Fig_tx_protocol}, we assume that the TX transmits pilot symbols within each iCSI coherence time to facilitate estimation of these coefficients $\big\{ \mathbf{I}_{\rm PACE}^{\dag} \boldsymbol{\mathcal{H}}(f_k) \mathbf{t} \big| k \in \mathcal{K}\big\}$ at the RX. Since these pilots are used only to estimate the effective single-input-single-output (SISO) channel and not the actual MIMO channel, the corresponding overhead is small and shall be neglected here. Assuming perfect estimates of these channel coefficients, from \eqref{eqn_exprn_Y_k} the effective SNR for the $k$-th sub-carrier, and the instantaneous system spectral efficiency (iSE), respectively, can be expressed as:
\begin{flalign} 
& \gamma^{\rm PACE}_{k}(\hat{\mathbf{W}}^{(\rm r)}, \mathbf{H}(t)) \triangleq \frac{{|\mathbf{I}_{\rm PACE}^{\dag} \boldsymbol{\mathcal{H}}(f_k) \mathbf{t}|}^2 E_k^{(\rm d)}}{ {\|\mathbf{I}_{\rm PACE}\|}^2 \mathrm{N}_0 T_{\rm cs}/T_{\rm s} } & \label{eqn_SNR_exprn} \\
& {\rm iSE}^{\rm PACE} \big(\hat{\mathbf{W}}^{(\rm r)}, \mathbf{H}(t) \big) \triangleq \sum_{k \in \mathcal{K}} \frac{1}{K} \log \big(1 + \gamma^{\rm PACE}_{k}(\hat{\mathbf{W}}^{(\rm r)}, \mathbf{H}(t)) \big), \!\!\!\!\!\!\!\! & \label{eqn_cap}
\end{flalign}
where we neglect the cyclic prefix overhead in \eqref{eqn_cap} for convenience. Note that the iSE maximizing data power allocation $\{ E_k^{(\rm d)} | k \in \mathcal{K}\}$ can be obtained via water-filling across the sub-carriers. While the exact expressions for \eqref{eqn_SNR_exprn}--\eqref{eqn_cap} are involved, their expectations with respect to $\mathbf{I}_{\rm PACE}$ can be bounded, as stated by the following theorem.
\begin{theorem} \label{Theorem1}
If the RX array response vectors for the channel MPCs are mutually orthogonal, i.e., $\mathbf{a}_{\rm rx}(\ell)^{\dag} \mathbf{a}_{\rm rx}(i) = 0$ for $\ell \neq i$, the effective SNR and iSE, averaged over the beamformer noise $\hat{\mathbf{W}}^{(\rm r)}$, can be bounded as in \eqref{eqn_lemma_perf_anal}
\begin{figure*}[!h]
\begin{subequations} \label{eqn_lemma_perf_anal}
\begin{eqnarray} 
\gamma^{\rm PACE}_{k}(\mathbf{H}(t)) & \gtrapprox & \frac{ M_{\rm rx} E^{(\rm r)} e^{-\mathrm{Var}\{\theta_{\rm L}(t)\}} {|\beta(f_{\rm c}, f_k)|}^2 E_k^{(\rm d)}}{ \beta(0,0) \frac{\mathrm{N}_0}{D_2} E_k^{(\rm d)} + \beta(0,0) \frac{\mathrm{N}_0 T_{\rm cs}}{T_{\rm s}} E^{(\rm r)} e^{- \mathrm{Var}\{\theta_{\rm L}(t)\}} + \frac{{[\mathrm{N}_0]}^2 T_{\rm cs}}{D_2 T_{\rm s}} } \label{eqn_avgSNR_exprn} \\
{\rm iSE}^{\rm PACE} \big(\mathbf{H}(t) \big) & \gtrapprox & \sum_{k \in \mathcal{K}} \frac{1}{K} \log \left(1 \!+\! \frac{ M_{\rm rx} E^{(\rm r)} e^{-\mathrm{Var}\{\theta_{\rm L}(t)\}} {|\beta(f_{\rm c}, f_k)|}^2 E_k^{(\rm d)}}{ \beta(0,0) \frac{\mathrm{N}_0}{D_2} E_k^{(\rm d)} \!\!+\! \beta(0,0) \frac{\mathrm{N}_0 T_{\rm cs}}{T_{\rm s}} E^{(\rm r)} e^{- \mathrm{Var}\{\theta_{\rm L}(t)\}} + \frac{{[\mathrm{N}_0]}^2 T_{\rm cs}}{D_2 T_{\rm s}}  } \right), \!\!\!\!\!\!\!\! \label{eqn_avg_cap}
\end{eqnarray}
\end{subequations}
\end{figure*}
where $\beta(\dot{f},\ddot{f}) = \sum_{\ell=0}^{L-1} {|\alpha_{\ell}|}^2 {|{\mathbf{a}_{\rm tx}(\ell) }^{\dag} \mathbf{t}|}^2 e^{{\rm j} [2 \pi \dot{f} (\hat{\tau}_{\ell}-\tau_{\ell}) - 2 \pi \ddot{f} \tau_{\ell}]}$ and $\gtrapprox$ represents a $\geq$ inequality at a high enough SNR such that the approximations in Section \ref{sec_analog_phase_amp_est} are accurate.
\end{theorem}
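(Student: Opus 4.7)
The plan is to substitute the closed-form expression for $\mathbf{I}_{\rm PACE}$ from \eqref{eqn_integral_phase_2} into the SNR definition \eqref{eqn_SNR_exprn}, collapse the resulting bilinear forms using the orthogonality hypothesis, and then average over the beamformer noise $\hat{\mathbf{W}}^{(\rm r)}$. Writing $\mathbf{I}_{\rm PACE} = \mathbf{S} + \sqrt{T_{\rm cs}}\,\hat{\mathbf{W}}^{(\rm r)}$ with deterministic signal component $\mathbf{S} \triangleq \sqrt{T_{\rm cs}E^{(\rm r)}}\,e^{-{\rm j}\bar{\theta}}\,e^{-\mathrm{Var}\{\theta_{\rm L}(t)\}/2}\,\boldsymbol{\hat{\mathcal{H}}}(0)\mathbf{t}$, both $|\mathbf{I}_{\rm PACE}^{\dag}\boldsymbol{\mathcal{H}}(f_k)\mathbf{t}|^2$ and $\|\mathbf{I}_{\rm PACE}\|^2$ split into a signal-only piece, two cross pieces linear in $\hat{\mathbf{W}}^{(\rm r)}$, and a quadratic noise piece.

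The key step is to use the assumption $\mathbf{a}_{\rm rx}(\ell)^{\dag}\mathbf{a}_{\rm rx}(i) = M_{\rm rx}\,\delta_{\ell,i}$ to kill every cross-MPC term in the bilinear forms involving $\boldsymbol{\hat{\mathcal{H}}}(0)$ and $\boldsymbol{\mathcal{H}}(f_k)$. The sums collapse to single-MPC sums and, after some bookkeeping of the $e^{-{\rm j}2\pi f_{\rm c}\hat{\tau}_{\ell}}$ and $e^{-{\rm j}2\pi(f_{\rm c}+f_k)\tau_{\ell}}$ phases, one obtains $(\boldsymbol{\hat{\mathcal{H}}}(0)\mathbf{t})^{\dag}\boldsymbol{\mathcal{H}}(f_k)\mathbf{t} = M_{\rm rx}\,\beta(f_{\rm c},f_k)$ and $\|\boldsymbol{\hat{\mathcal{H}}}(0)\mathbf{t}\|^2 = \|\boldsymbol{\mathcal{H}}(f_k)\mathbf{t}\|^2 = M_{\rm rx}\,\beta(0,0)$. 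Taking the expectation over $\hat{\mathbf{W}}^{(\rm r)}\sim\mathcal{CN}[\mathbb{O},(\mathrm{N}_0/D_2)\mathbb{I}_{M_{\rm rx}}]$ annihilates the linear cross terms, while the quadratic noise pieces evaluate to $T_{\rm cs}(\mathrm{N}_0/D_2)M_{\rm rx}\,\beta(0,0)$ in the numerator (via $\mathbb{E}|\hat{\mathbf{N}}^{\dag}\boldsymbol{\mathcal{H}}(f_k)\mathbf{t}|^2 = T_{\rm cs}(\mathrm{N}_0/D_2)\|\boldsymbol{\mathcal{H}}(f_k)\mathbf{t}\|^2$) and to $T_{\rm cs}(\mathrm{N}_0/D_2)M_{\rm rx}$ in the denominator.

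To land on \eqref{eqn_avgSNR_exprn} I would then approximate $\mathbb{E}[\gamma^{\rm PACE}_k] \approx \mathbb{E}[\text{num}]/\mathbb{E}[\text{den}]$, which is sharp exactly when $\|\mathbf{I}_{\rm PACE}\|^2$ concentrates around $\|\mathbf{S}\|^2$, i.e., in the high-SNR regime flagged by the $\gtrapprox$. After canceling the common $T_{\rm cs}M_{\rm rx}$ factor and dividing through by $T_{\rm cs}/T_{\rm s}$, one can obtain a ratio whose numerator is $[M_{\rm rx}E^{(\rm r)}e^{-\mathrm{Var}\{\theta_{\rm L}\}}|\beta(f_{\rm c},f_k)|^2 + (\mathrm{N}_0/D_2)\beta(0,0)]E^{(\rm d)}_k$; dropping the second, non-negative summand in the numerator produces the cleaner lower bound stated in \eqref{eqn_avgSNR_exprn}. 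Equation \eqref{eqn_avg_cap} then follows by inserting this per-subcarrier SNR lower bound into $\log(1+\cdot)$ and summing over $k\in\mathcal{K}$.

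The main obstacle is not the algebra but justifying the interchanges of $\mathbb{E}\{\cdot\}$ with the ratio and with $\log(1+\cdot)$. Jensen's inequality applied to the concave $\log$ gives only $\mathbb{E}[\log(1+\gamma)] \le \log(1+\mathbb{E}[\gamma])$, the wrong direction for a strict lower bound; likewise $\mathbb{E}[X/Y]$ is not literally $\mathbb{E}[X]/\mathbb{E}[Y]$ when $X,Y$ are correlated through $\hat{\mathbf{W}}^{(\rm r)}$. This is exactly the loss absorbed by the $\gtrapprox$ symbol: in the high-SNR regime where $\hat{\mathbf{W}}^{(\rm r)}$ is small compared to $\mathbf{S}$, the random denominator concentrates tightly around $\|\mathbf{S}\|^2$ and both substitutions become asymptotically tight. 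A rigorous version would require a first- or second-order perturbation expansion of $\gamma^{\rm PACE}_k$ in $\hat{\mathbf{W}}^{(\rm r)}$ together with a concentration argument for $\|\hat{\mathbf{W}}^{(\rm r)}\|^2$, which is precisely the sort of approximation invoked throughout Section \ref{sec_analog_phase_amp_est}.
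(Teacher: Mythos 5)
Your skeleton (substitute \eqref{eqn_integral_phase_2} into \eqref{eqn_SNR_exprn}, collapse the bilinear forms with $\mathbf{a}_{\rm rx}(\ell)^{\dag}\mathbf{a}_{\rm rx}(i)=M_{\rm rx}\delta_{\ell,i}$ so that $(\boldsymbol{\hat{\mathcal{H}}}(0)\mathbf{t})^{\dag}\boldsymbol{\mathcal{H}}(f_k)\mathbf{t}=M_{\rm rx}\beta(f_{\rm c},f_k)$, then average over $\hat{\mathbf{W}}^{(\rm r)}$) matches the paper's, but you miss the one maneuver that makes the result an actual inequality rather than a ratio-of-means heuristic. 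The paper does \emph{not} approximate $\mathbb{E}[X/Y]$ by $\mathbb{E}[X]/\mathbb{E}[Y]$. It first reclassifies the $\hat{\mathbf{W}}^{(\rm r)}$-dependent part of the combined signal, ${[\hat{\mathbf{W}}^{(\rm r)}]}^{\dag}\boldsymbol{\mathcal{H}}(f_k)\mathbf{t}\,x_k$, as additional noise. After that, the signal power in the numerator is deterministic (only the $\mathbf{S}^{\dag}\boldsymbol{\mathcal{H}}(f_k)\mathbf{t}$ piece survives there), and \emph{all} randomness in $\hat{\mathbf{W}}^{(\rm r)}$ sits in the denominator $Y$. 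Since $y\mapsto c/y$ and $y\mapsto\log(1+c/y)$ are convex on $y>0$, Jensen applied to the denominator alone gives $\mathbb{E}[c/Y]\ge c/\mathbb{E}[Y]$ and $\mathbb{E}[\log(1+c/Y)]\ge\log(1+c/\mathbb{E}[Y])$ --- the \emph{correct} direction. Your objection that ``Jensen for the concave $\log$ goes the wrong way'' only applies if you take the expectation of the whole SNR inside the $\log$; it evaporates once the numerator is deterministic. So the $\gtrapprox$ in the theorem absorbs only the Section~\ref{sec_analog_phase_amp_est} linearization of the PLL, not the interchange of expectation with the ratio or the $\log$; no concentration or perturbation argument is needed.

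This reclassification also fixes a bookkeeping error in your final expression. You place the quadratic term $\mathbb{E}\vert{[\hat{\mathbf{W}}^{(\rm r)}]}^{\dag}\boldsymbol{\mathcal{H}}(f_k)\mathbf{t}\vert^2E_k^{(\rm d)}=\frac{\mathrm{N}_0}{D_2}M_{\rm rx}\beta(0,0)E_k^{(\rm d)}$ in the numerator and then drop it, which yields a ratio whose denominator is only $\beta(0,0)\frac{\mathrm{N}_0T_{\rm cs}}{T_{\rm s}}E^{(\rm r)}e^{-\mathrm{Var}\{\theta_{\rm L}(t)\}}+\frac{\mathrm{N}_0^2T_{\rm cs}}{D_2T_{\rm s}}$; that is strictly larger than the right-hand side of \eqref{eqn_avgSNR_exprn}, which carries the extra term $\beta(0,0)\frac{\mathrm{N}_0}{D_2}E_k^{(\rm d)}$ in the denominator. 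That term is precisely the power of the reclassified cross term sitting where it belongs, in the noise. As written, your argument neither reproduces the stated bound nor establishes a genuine $\ge$, so the gap is real; adopting the signal/noise reclassification plus denominator-only Jensen closes it.
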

\begin{proof}
Substituting \eqref{eqn_integral_phase_2} in \eqref{eqn_exprn_Y_k}, and by treating the received signal component corresponding to $\hat{\mathbf{W}}^{(\rm r)}$, i.e., ${[\hat{\mathbf{W}}^{(\rm r)}]}^{\dag} \boldsymbol{\mathcal{H}}(f_k) \mathbf{t}  x_k$, as noise, we can obtain a lower bound to the mean SNR as:
\begin{flalign}
& \gamma^{\rm PACE}_{k}(\mathbf{H}(t)) & \nonumber \\
& \qquad \gtrapprox \mathbb{E}_{\hat{\mathbf{W}}^{(\rm r)}} \Bigg\{ \frac{T_{\rm cs} {\big|\mathbf{t}^{\dag} \boldsymbol{\hat{\mathcal{H}}}(f_0)^{\dag} \boldsymbol{\mathcal{H}}(f_k) \mathbf{t} \big|}^2 E_k^{(\rm d)} E^{(\rm r)} e^{-\mathrm{Var}\{\theta_{\rm L}(t)\}}}{ \mathbb{E}_{x_k,\tilde{\mathbf{W}}^{\rm d}[k]} \Big\{ {\big|\mathbf{I}_{\rm PACE}^{\dag} \tilde{\mathbf{W}}^{\rm d}[k] + {[\hat{\mathbf{W}}^{(\rm r)}]}^{\dag} \boldsymbol{\mathcal{H}}(f_k) \mathbf{t}  x_k \big|}^2 \Big\} } \Bigg\} & \nonumber \\
& \qquad \stackrel{(1)}{\geq} \frac{T_{\rm cs} {\big|\mathbf{t}^{\dag} \boldsymbol{\hat{\mathcal{H}}}(f_0)^{\dag} \boldsymbol{\mathcal{H}}(f_k) \mathbf{t} \big|}^2 E_k^{(\rm d)} E^{(\rm r)} e^{- \mathrm{Var}\{\theta_{\rm L}(t)\}}}{ \mathbb{E}_{\hat{\mathbf{W}}^{(\rm r)}} \Big\{ {\|\mathbf{I}_{\rm PACE}\|}^2 \mathrm{N}_0 T_{\rm cs}/T_{\rm s} + {\big|{[\hat{\mathbf{W}}^{(\rm r)}]}^{\dag} \boldsymbol{\mathcal{H}}(f_k) \mathbf{t}\big|}^2 E_k^{(\rm d)}  \Big\} } & \nonumber \\
& \qquad = \frac{{\big|\mathbf{t}^{\dag} \boldsymbol{\hat{\mathcal{H}}}(f_0)^{\dag} \boldsymbol{\mathcal{H}}(f_k) \mathbf{t} \big|}^2 E_k^{(\rm d)} E^{(\rm r)} e^{- \mathrm{Var}\{\theta_{\rm L}(t)\}}}{\text{denom.}} & \nonumber \\
&\text{denom.} = \bigg[ \mathbf{t}^{\dag} \boldsymbol{\hat{\mathcal{H}}}(f_0)^{\dag} \boldsymbol{\hat{\mathcal{H}}}(f_0) \mathbf{t} E^{(\rm r)} e^{- \mathrm{Var}\{\theta_{\rm L}(t)\}} & \nonumber \\
& \qquad \qquad \qquad +\! \frac{M_{\rm rx} \mathrm{N}_0}{D_2} \bigg] \frac{\mathrm{N}_0 T_{\rm cs}}{T_{\rm s}} \!+\! \frac{\mathrm{N}_0}{D_2} \mathbf{t}^{\dag} {\boldsymbol{\mathcal{H}}(f_k)}^{\dag} \boldsymbol{\mathcal{H}}(f_k) \mathbf{t} E_k^{(\rm d)} & \nonumber \\
&\qquad \stackrel{(2)}{=} \frac{M_{\rm rx}{|\beta(f_{\rm c}, f_k)|}^2 E_k^{(\rm d)} E^{(\rm r)} e^{- \mathrm{Var}\{\theta_{\rm L}(t)\}}}{ \beta(0, 0) \frac{\mathrm{N}_0 T_{\rm cs}}{T_{\rm s}} E^{(\rm r)} e^{- \mathrm{Var}\{\theta_{\rm L}(t)\}} \!+\! \frac{\mathrm{N}^2_0 T_{\rm cs}}{D_2 T_{\rm s}} \!+\! \frac{\mathrm{N}_0}{D_2} \beta(0, 0) E_k^{(\rm d)} }, \!\!\!\!\!\!\!\!\!\!\!\! &  \label{eqn_theorem_proof1}
\end{flalign}
where $\stackrel{(1)}{\geq}$ follows from the Jensen's inequality and $\stackrel{(2)}{=}$ from the orthogonality of the array response vectors. Similarly, by treating ${[\hat{\mathbf{W}}^{(\rm r)}]}^{\dag} \boldsymbol{\mathcal{H}}(f_k) \mathbf{t} x_k$ as Gaussian noise independent of $x_k$, a lower bound on the mean iSE can be obtained as:
\begin{flalign}
&{\rm iSE}^{\rm PACE} \big(\mathbf{H}(t) \big) \gtrapprox \mathbb{E}_{\hat{\mathbf{W}}^{(\rm r)}} \sum_{k \in \mathcal{K}} \frac{1}{|\mathcal{K}|} \log\bigg[ 1 & \nonumber \\
& \qquad \qquad + \frac{T_{\rm cs} {\big|\mathbf{t}^{\dag} \boldsymbol{\hat{\mathcal{H}}}(f_0)^{\dag} \boldsymbol{\mathcal{H}}(f_k) \mathbf{t} \big|}^2 E_k^{(\rm d)} E^{(\rm r)} e^{- \mathrm{Var}\{\theta_{\rm L}(t)\}}}{ \mathbb{E}_{x_k,\tilde{\mathbf{W}}^{\rm d}[k]} \Big\{ {\big|\mathbf{I}_{\rm PACE}^{\dag} \tilde{\mathbf{W}}^{\rm d}[k] + {[\hat{\mathbf{W}}^{(\rm r)}]}^{\dag} \boldsymbol{\mathcal{H}}(f_k) \mathbf{t}  x_k \big|}^2 \Big\} } \bigg] & \nonumber \\
& \qquad \geq \sum_{k \in \mathcal{K}} \!\!\frac{1}{|\mathcal{K}|} \log\bigg[ 1 & \nonumber \\
& \qquad \qquad \!+\! \frac{M_{\rm rx}{|\beta(f_{\rm c}, f_k)|}^2 E_k^{(\rm d)} E^{(\rm r)} e^{- \mathrm{Var}\{\theta_{\rm L}(t)\}}}{ \beta(0, 0) \frac{\mathrm{N}_0 T_{\rm cs}}{T_{\rm s}} E^{(\rm r)} e^{- \mathrm{Var}\{\theta_{\rm L}(t)\}} \!+\! \frac{\mathrm{N}^2_0 T_{\rm cs}}{D_2 T_{\rm s}} \!+\! \frac{\mathrm{N}_0}{D_2} \beta(0, 0) E_k^{(\rm d)} } \bigg], & \nonumber
\end{flalign}
where we use similar steps to \eqref{eqn_theorem_proof1}.
\end{proof}
%
The array response orthogonality condition in Theorem \ref{Theorem1} is satisfied if the scatterers corresponding to different MPCs are well separated and $M_{\rm rx} \gg L$ \cite{Ayach2012}. Note that even though the RX does not explicitly estimate the array response vectors $\mathbf{a}_{\rm rx}(\ell)$ for the MPCs, we still observe an RX beamforming gain of $M_{\rm rx}$ in \eqref{eqn_avgSNR_exprn}. The impact of imperfect MRC combining and the resulting frequency-selective fading is quantified by $\beta(f_{\rm c}, f_k)$, where note that $|\beta(f_{\rm c}, f_k)| \leq |\beta(0,0)|$. Another drawback of the fading is that it may cause a drastic drop in performance of the one PLL architecture in Section \ref{subsec_pll_analysis} if $|A_1^{(\rm r)}|$ - the reference signal strength at the antenna $1$ - falls in a fading dip, as is evident from \eqref{eqn_PN_var} and \eqref{eqn_lemma_perf_anal}. Note however that the weighted arraying architecture in Section \ref{subsec_carrier_arraying} enjoys \emph{diversity} against such fading by recovering the reference tone from multiple antennas $\mathcal{M}$.

\section{Initial access and aCSI estimation at the BS} \label{sec_IA_aCSI_at_BS}
In this section we suggest how aCSI can be acquired at the BS during the TX beamformer design phase and also propose a sample IA protocol that can utilize PACE. Note that power allocation, user-scheduling and design of the TX beamformer $\mathbf{t}$ requires knowledge of the TX array response vectors and amplitudes $\{|\alpha_{\ell}|, \mathbf{a}_{\rm tx}(\ell)\}$ for the different UEs. 
Such aCSI can be acquired at the BS either via uplink CE, or by downlink CE with CSI feedback from the RX. Uplink CE can be performed by transmitting an orthogonal pilot from each UE omni-directionally, and using any of the digital CE algorithms from Section \ref{sec_intro} at the BS. Note that PACE cannot be used at the BS since the pilots from multiple UEs need to be separated via digital processing. For downlink CE with feedback, the BS transmits reference signals sequentially along different transmit precoder beams (beam sweeping), with $D$ reference symbols for each beam. The UEs perform PACE for each TX beam, and provide the BS with uplink feedback about the corresponding link strength for data transmission. 

The suggested IA protocol is somewhat similar to the downlink CE with feedback, where the BS performs beam sweeping along different angular directions, possibly with different beam widths. For each TX beam, the BS transmits $D$ reference symbols, followed by a sequence of primary (PSS) and secondary synchronization sequences (SSS). 
The RX performs PACE, and provides uplink feedback to the BS upon successfully detecting a PSS. However due to lack of prior timing synchronization during IA phase, the `filter, sample and hold' circuit in Section \ref{subsec_phase_amp_estimate} cannot be used directly for the PACE. One alternative is to allow continuous transmission of the reference tone even during the PSS and SSS with the following suggested symbol structure:
\begin{eqnarray}
\tilde{\mathbf{s}}^{(\rm ia)}_{\rm tx}(t) &=& \sqrt{\frac{2}{T_{\rm cs}}} \mathbf{t}\bigg[ \sqrt{E^{(\rm r)}} + \sum_{k \in \mathcal{K}\setminus \mathcal{G}} x^{(\rm p)}_k e^{{\rm j} 2 \pi f_k t} \bigg] e^{{\rm j} 2 \pi f_{\rm c} t } \label{eqn_tx_signal_pilot}
\end{eqnarray}
where $\mathcal{G}$ defines a guard band around the reference tone, to reduce the impact of the data sub-carriers on the PLL output. The amplitude and phase estimation can then be performed similar to Section \ref{subsec_phase_amp_estimate}, by multiplying the received signal at each antenna with the PLL output and then filtering with a low pass filter with cut-off frequency $1/(D_2 T_{\rm cs})$. Due to the continuous availability of the reference tone, the filter outputs can be directly used to control the phase shifter at each antenna without the `sample and hold' operation. 
Since $D = {\rm O}(1)$, the IA latency does not scale with $M_{\rm rx}$ and yet the PSS/SSS symbols can exploit the RX beamforming gain, thus improving cell discovery radius and/or reducing IA overhead. 

\section{Simulation Results} \label{sec_sim_results}
For the simulation results, we consider a single cell scenario with a $\lambda/2$-spaced $32 \times 8$ ($M_{\rm tx} = 256$) antenna BS and one representative UE with a $\lambda/2$-spaced $16 \times 4$ ($M_{\rm rx} = 64$) antenna array, having one down-conversion chain and using PACE aided beamforming. The BS has perfect aCSI and transmits one spatial OFDM data stream to this UE with $K=1024$ sub-carriers and the beamformer $\mathbf{t}$ aligned with the strongest channel MPC. 
The RX beamformer design phase is assumed to last $D=6$ symbols with $D_2 = 2$, where the BS transmits reference symbols with power $E^{(\rm r)} = 20 E_{\rm cs}/K$ (to satisfy spectral mask regulations). The system parameters for the one PLL and weighted arraying case, respectively, are as given in Table \ref{Table_sim_param} on page \pageref{Table_sim_param}. For comparison to existing schemes, we include the performance of RTAT - the continuous ACE based beamforming scheme in \cite{Ratnam_ICC2018}, and of  statistical RX analog beamforming \cite{Sudarshan}, where the beamformer is the largest eigen-vector of the RX spatial correlation matrix: $\mathbf{R}_{\rm rx}(\mathbf{t}) = \frac{1}{K} \sum_{k \in \mathcal{K}} \boldsymbol{\hat{\mathcal{H}}}(f_k) \mathbf{t} \mathbf{t}^{\dag} {\boldsymbol{\hat{\mathcal{H}}}(f_k)}^{\dag}$. For both these schemes we ignore impact of phase noise and additionally, for statistical beamforming we consider two cases: (a) perfect knowledge of $\mathbf{R}_{\rm rx}(\mathbf{t})$ at the RX and (b) estimate of $\mathbf{R}_{\rm rx}(\mathbf{t})$ obtained using sparse-ruler sampling \cite{Pal2010} - a reduced complexity digital CE technique. Note that PACE uses $6$ reference symbols per beamformer update phase, RTAT avoids reference symbols but requires continuous transmission of the reference and sparse-ruler sampling requires $21$ pilot symbols for $M_{\rm rx} = 64$. 

We first consider a sparse multi-path channel having $L=3$ MPCs with delays $\hat{\tau}_{\ell}= \{0, 20, 40\}$ns, angles of arrival $\psi^{\rm rx}_{\rm azi} = \{0, \pi/6, -\pi/6\}$, $\psi^{\rm rx}_{\rm ele} = \{0.45\pi, \pi/2, \pi/2\}$ and effective amplitudes $\frac{\alpha_{\ell} \mathbf{a}_{\rm tx}(\ell)^{\dag} \mathbf{t}}{\sqrt{\beta(0,0)}} = \{\sqrt{0.6}, -\sqrt{0.3}, \sqrt{0.1}\}$, respectively, during the RX beamformer design phase and $\tau_{\ell} = \hat{\tau}_{\ell} + \{30, 25, 25\}$ps for one snapshot of the data transmission phase. For this channel, the mean iSE of PACE aided beamforming, obtained using Monte-Carlo simulations with the non-linear PLL equations \eqref{eqn_PLL_diff_eqn_1}, \eqref{eqn_carr_array_sec}, \eqref{eqn_carr_array_prim}, is compared to the analytical approximation \eqref{eqn_avg_cap}, and the performance other schemes in Fig.~\ref{Fig_iSE_compare_sparse}. Since the RX beamformer $\mathbf{I}_{\rm PACE}$ in \eqref{eqn_integral_phase_1} is random, the one sigma interval of iSE is also depicted as a shaded region here. As is evident from the results, the beamforming gain with PACE aided beamforming is only $2$ dB lower than that of statistical beamforming, above a certain SNR threshold. Below this threshold, however, PACE experiences an exponential decay in peformance due to the oscillator phase noise, as also predicted by Theorem \ref{Theorem1}. As is expected, this SNR threshold is lower for weighted carrier arraying than for one PLL. Furthermore, the derived analytical approximations are also accurate above this SNR threshold. PACE also outperforms RTAT at high SNR due to the judicious transmission of the reference, while the deceptively better performance of RTAT at low SNR is due to neglect of phase-noise. Note that these PACE results are obtained for an oscillator offset of $5$ MHz (see Table \ref{Table_sim_param} on page \pageref{Table_sim_param}). Better performance can be achieved if the PLL is optimized for more accurate local oscillators. 
\begin{figure}[h] 
\centering
\subfloat[Influence of SNR]{\includegraphics[width= 0.45\textwidth]{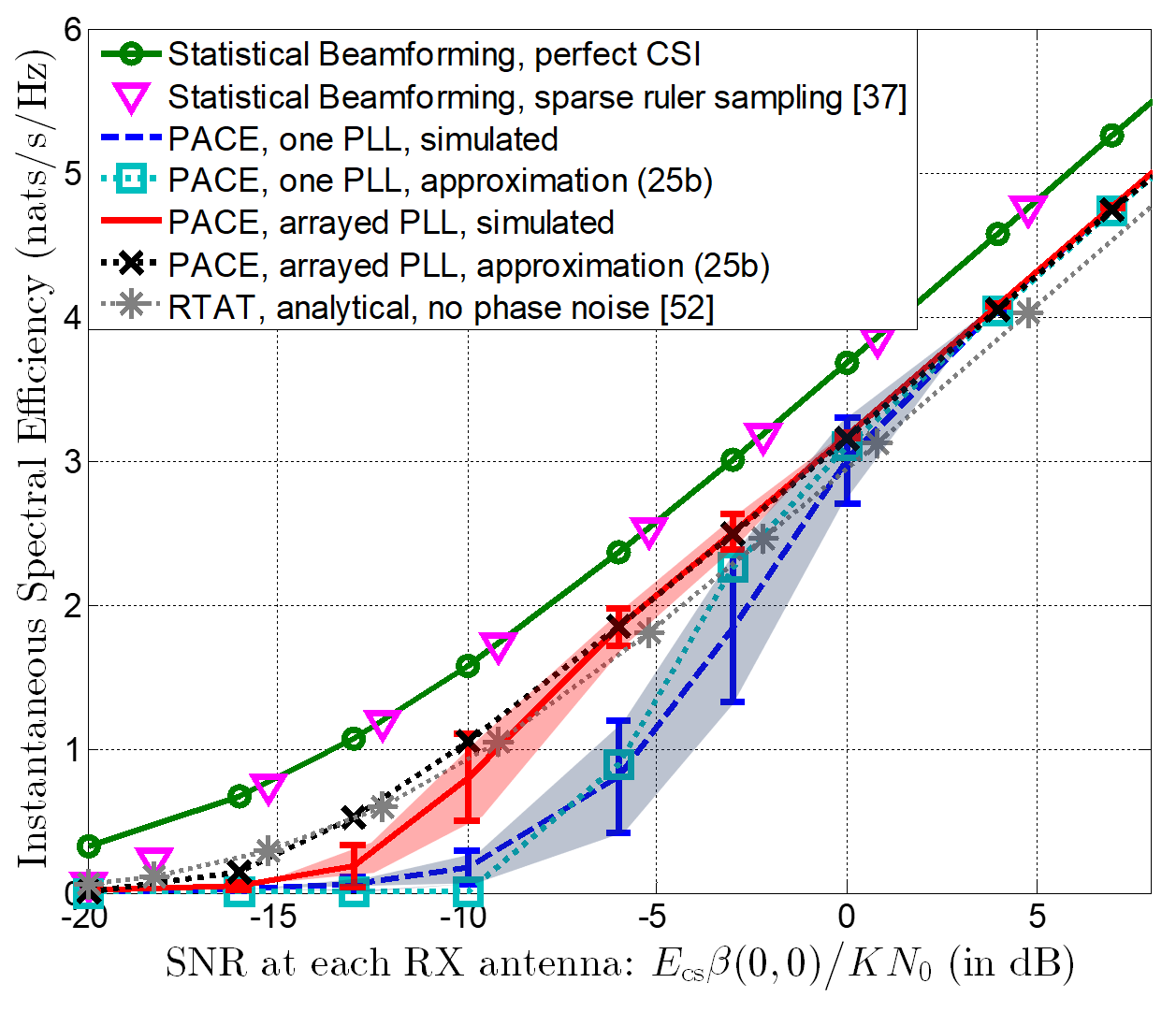} \label{Fig_iSE_compare_sparse}} \hspace{5 mm}
\subfloat[Influence of number of MPCs]{\includegraphics[width= 0.45\textwidth]{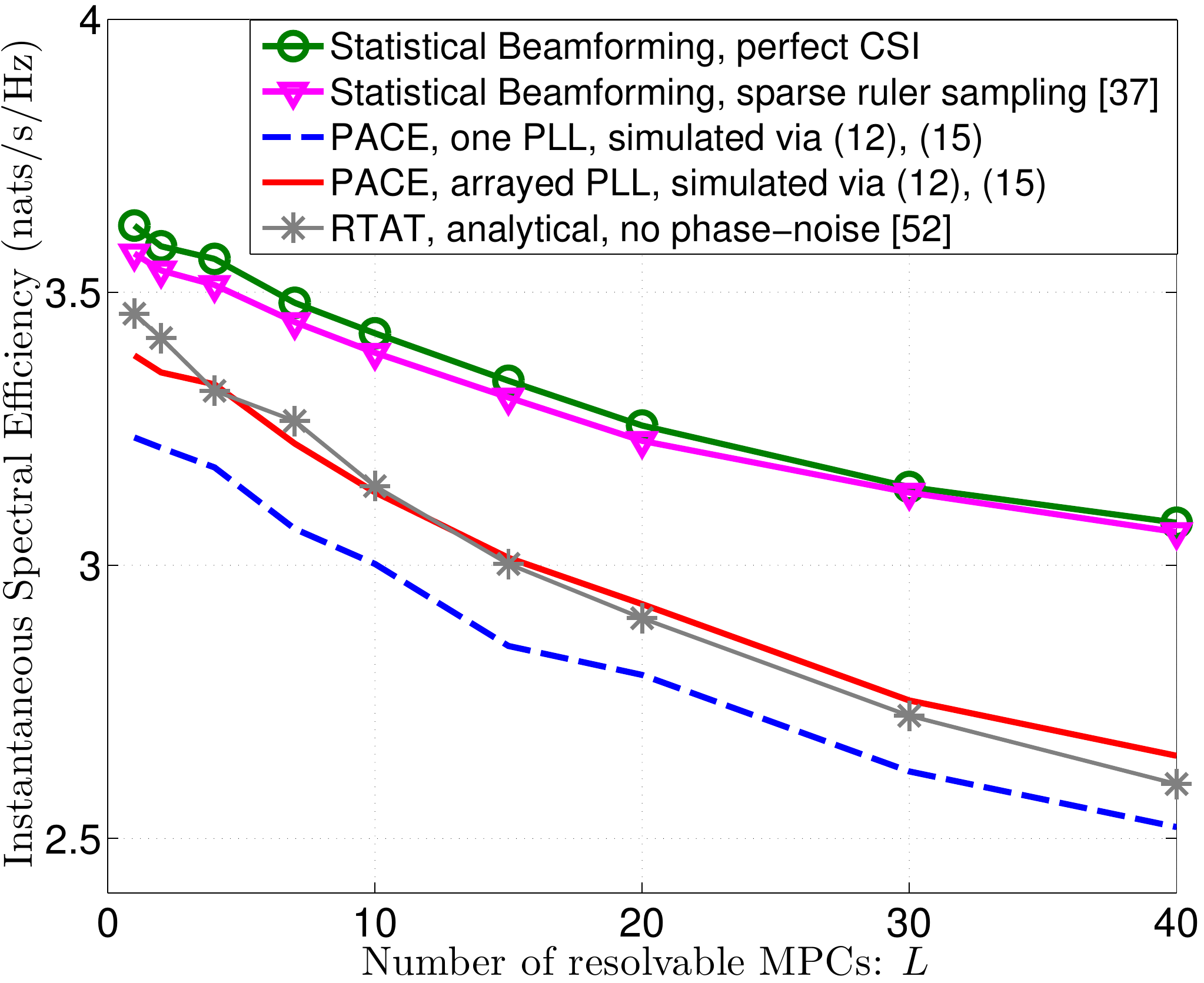} \label{Fig_iSE_compare_dense}}
\caption{Comparison of iSE for PACE based beamforming and other schemes versus SNR and number of MPCs. Here $E^{(\rm r)} = 20 E_{\rm cs}/K$, $E^{(\rm d)}_k = E_{\rm cs}/K$ $\forall k \in \mathcal{K}$ and the PLL parameters are from Table \ref{Table_sim_param} on page \pageref{Table_sim_param}. For Fig.~\ref{Fig_iSE_compare_dense} we use $\frac{ \beta(0,0) E_{\rm cs} }{K \mathrm{N}_0} = 1$} 
\label{Fig_iSE_compare} 
\end{figure} 

To study the impact of more realistic channels and number of MPCs, we next model the channel as a rich scattering stochastic channel with $L$ resolvable MPCs, each with $10$ unresolved sub-paths. Here the MPCs and sub-paths are generated identically to the clusters and rays, respectively, in the 3GPP TR38.900 Rel 14 channel model (UMi NLoS scenario) \cite{TR38900_chanmodel}. The only difference from \cite{TR38900_chanmodel} is that we use an intra-cluster delay spread of $1 ns$ and an intra-cluster angle spread of $\pi/50$ (for all elevation, azimuth, arrival and departure), to ensure that the sub-paths of each MPC are unresolvable. The channel SNR at each RX antenna (including the TX beamforming gain) is fixed at $0$ dB, and the channel variation between beamformer design phase and one snapshot of the data transmission phase is modeled by assuming that the RX moves a distance of $d = 2$cm in a random azimuth direction without changing its orientation. Note that this channel can also be represented by our system model by replacing $L$ in \eqref{eqn_channel_impulse_resp} with $10L$.
For this stochastic channel model, the mean iSE for PACE aided beamforming, averaged over channel realizations, is compared to RTAT and statistical beamforming in Fig.~\ref{Fig_iSE_compare_dense}. For computational tractability, we skip the non-linear PLL simulation and use the analytical expressions \eqref{eqn_integral_phase_2} and \eqref{eqn_cap} to quantify performance of PACE.\footnote{Note that \eqref{eqn_avg_cap} is not applicable due to non-orthogonality of array response vectors.} These expressions are accurate at $0$dB SNR as observed from Fig.~\ref{Fig_iSE_compare_sparse}. As observed from the results, the loss in beamforming gain for PACE aided beamforming increases with $L$, and therefore PACE is mainly suitable for channels with $L \leq 10$ resolvable MPCs. It must be emphasized that such cases may frequently occur at mm-wave frequencies, where the number of resolvable MPCs/clusters with significant energy (within $20$dB of the strongest) is on the order $3-10$ \cite{Akdeniz2014, TR38900_chanmodel}. 

Note that for the iSE results in this section, we did not include the CE overhead. While digital appoaches like sparse ruler sampling \cite{Pal2010, Caire2017} require $21$ pilots (for $M_{\rm rx}=64$), PACE uses only $D=6$ pilots. The corresponding overhead reduction is significant when downlink CE with feedback is used for aCSI acquisition at the BS, such as in frequency division duplexing systems.\footnote{Even in time division duplexing systems, dowlink CE with feedback may be used during the IA phase, causing a large IA latency. PACE can help reduce IA latency in such situations.} For example with exhaustive beamscanning \cite{Jeong2015} at the TX and an aCSI coherence time of $10$ms, the BS aCSI acquisition overhead reduces from $40\%$ for sparse ruler techniques to $11\%$ for PACE (see Section~\ref{sec_IA_aCSI_at_BS} for protocol). The overhead reduction is expected to be higher if the additional time required for beam switching and settling \cite{Sands2002, Venugopal2017} are also taken into account. Thus, PACE aided beamforming shows potential in solving the CE overhead issue of hybrid massive MIMO systems, with minimal degradation in performance. 

\section{Conclusions} \label{sec_conclusions}
This paper proposes the use of PACE for designing the RX beamformer in massive MIMO systems. This process involves transmission of a reference sinusoidal tone during each beamformer design phase, and estimation of its received amplitude and phase at each RX antenna using analog hardware. A one PLL based carrier recovery circuit is proposed to enable the PACE receiver, and its analysis suggests that the quality of obtained channel estimates decay exponentially with inverse of the SNR at the PLL input. To remedy this and also to obtain diversity against fading, a multiple PLL based weighted carrier arraying architecture is also proposed. The performance analysis suggests that PACE aided beamforming can be interpreted as using the channel estimates on one sub-carrier to perform beamforming on other sub-carriers, with an additional loss factor corresponding to the circuit phase-noise. Simulation results suggest that PACE aided beamforming suffers only a small beamforming loss in comparison to conventional analog beamforming in sparse channels, at sufficiently high SNR. This loss however increases with the number of channel MPCs $L$, and hence PACE is mostly suitable for sparse channels with few MPCs. The CE overhead reduction with PACE is significant when downlink CE with feedback is required. Benefits of PACE aided beamforming during IA phase are also discussed, although a more detailed analysis will be a subject for future work. Similarly the performance of PACE at very low SNR and with system mismatches/imperfections also requires more attention. 

\section*{Acknowledgement}
The authors would like to thank Dr. W. C. Lindsey and Dr. H. Hashemi of University of Southern California for their helpful comments regarding phase-locked loops.

\bibliographystyle{ieeetr}
\bibliography{references}

\begin{IEEEbiography}[{\includegraphics[width=1in,height=1.25in,clip,keepaspectratio]{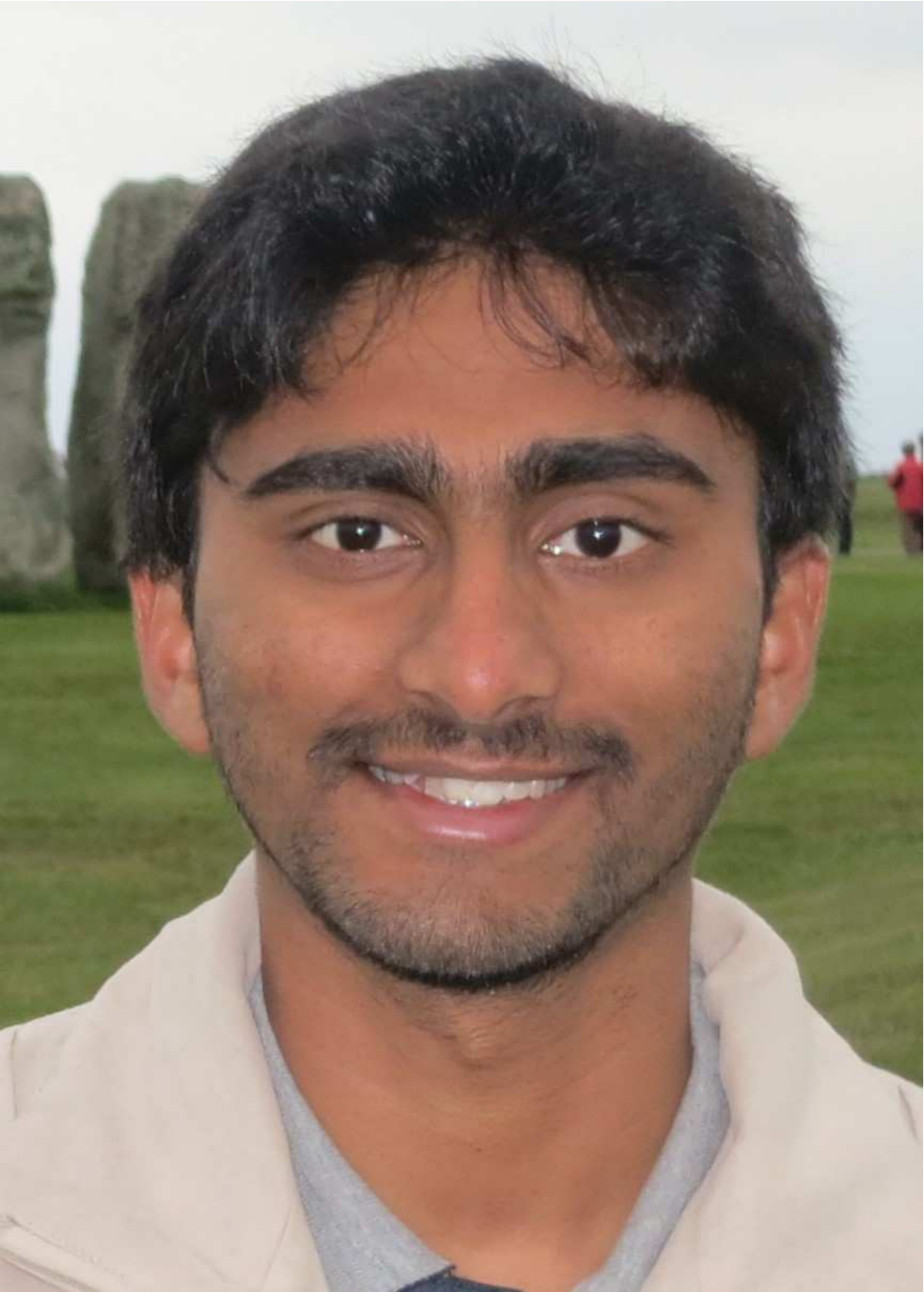}}]{Vishnu V. Ratnam}
(S'10) received the B.Tech. degree (Hons.) in electronics and electrical communication engineering from IIT Kharagpur, Kharagpur, India
in 2012, where he graduated as the Salutatorian for the class of 2012. He received the Ph.D. degree in electrical engineering from University of Southern California, Los Angeles, CA, USA in 2018. He is currently a senior research engineer in the Standards and Mobility Innovation Lab at Samsung Research America. His research interests are in reduced complexity transceivers for large antenna systems (massive MIMO/mm-wave) and ultrawideband systems, channel estimation techniques, manifold signal processing and in resource allocation problems in multi-antenna networks. 

Mr. Ratnam is a recipient of the Best Student Paper Award at the IEEE International Conference on Ubiquitous Wireless Broadband (ICUWB) in 2016, Bigyan Sinha memorial award in 2012 and is a member of the Phi-Kappa-Phi honor society.
\end{IEEEbiography}

\begin{IEEEbiography}[{\includegraphics[width=1in,height=1.25in,clip,keepaspectratio]{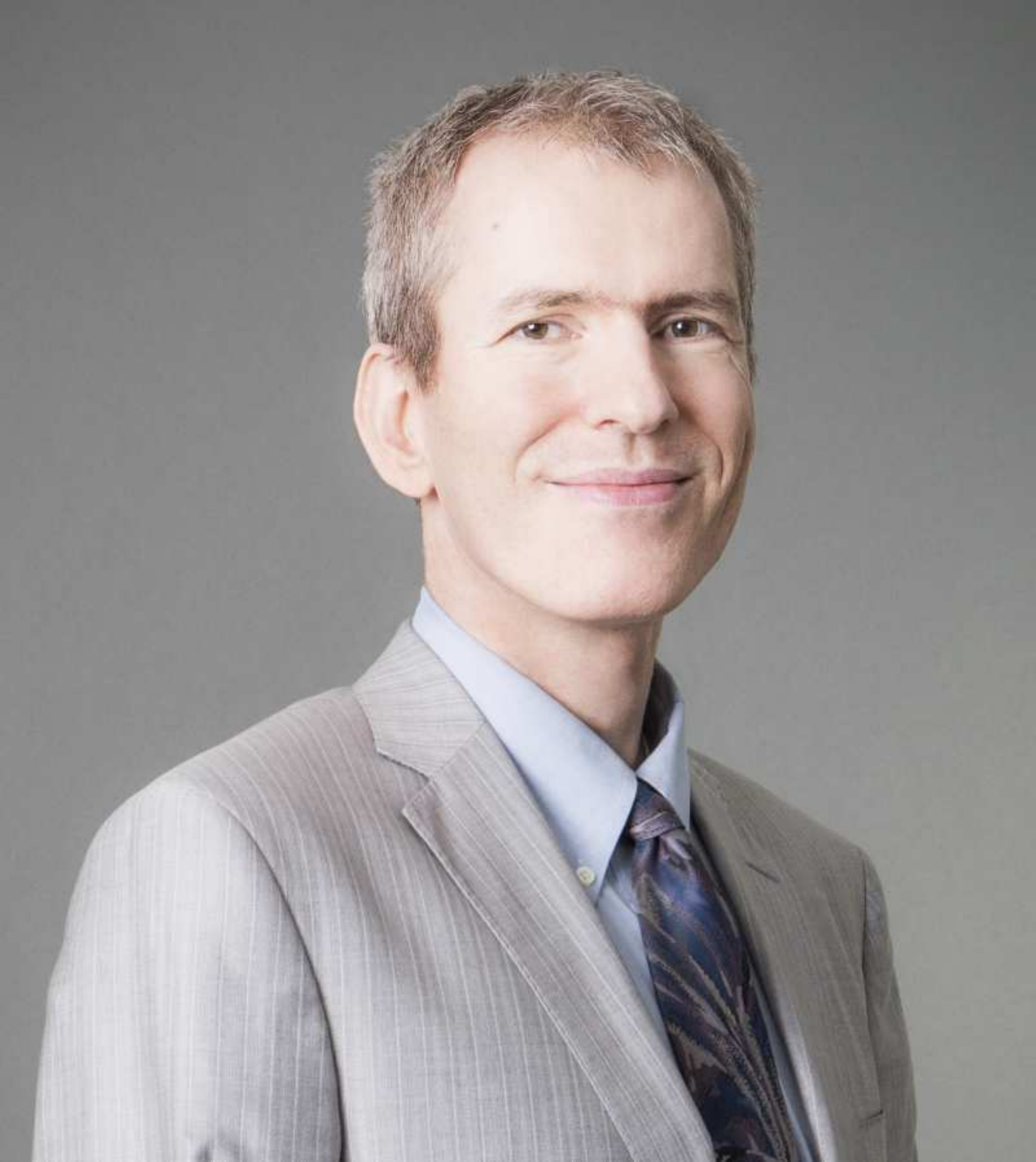}}]{Andreas F. Molisch}
(S'89--M'95--SM'00--F'05) received the Dipl. Ing., Ph.D., and habilitation degrees from the Technical University of Vienna, Vienna, Austria, in 1990, 1994, and 1999, respectively. He subsequently was with AT\&T (Bell) Laboratories Research (USA); Lund University, Lund, Sweden, and Mitsubishi Electric Research Labs (USA). He is now a Professor and the Solomon Golomb -- Andrew and Erna Viterbi Chair at the University of Southern California, Los Angeles, CA, USA. 

His current research interests are the measurement and modeling of mobile radio channels, multi-antenna systems, wireless video distribution, ultra-wideband communications and localization, and novel modulation formats. He has authored, coauthored, or edited four books (among them the textbook Wireless Communications, Wiley-IEEE Press), 19 book chapters, more than 230  journal papers, more than 320 conference papers, as well as more than 80 patents and 70 standards contributions.

Dr. Molisch has been an Editor of a number of journals and special issues, General Chair, Technical Program Committee Chair, or Symposium Chair of multiple international conferences, as well as Chairman of various international standardization groups. He is a Fellow of the National Academy of Inventors, Fellow of the AAAS, Fellow of the IET, an IEEE Distinguished Lecturer, and a Member of the Austrian Academy of Sciences. He has received numerous awards, among them the Donald Fink Prize of the IEEE, the IET Achievement Medal, the Armstrong Achievement Award of the IEEE Communications Society, and the Eric Sumner Award of the IEEE.
\end{IEEEbiography}

\end{document}